\def\arg {\mathop{\rm arg}\nolimits}
\def\Re {\mathop{\rm Re}\nolimits}
\def\Res {\mathop{\rm Res}\nolimits}
\def\Ai {{\rm Ai}}
\newtheorem{pro}{Proposition}
\newtheorem{thm}{Theorem}[section]
\theoremstyle{remark}
\newtheorem{rem}[thm]{Remark}
\newtheorem{rhp}[thm]{RH problem}
\numberwithin{equation}{section}
\begin{document}

\title{Asymptotics of the determinant of the modified Bessel functions and the second Painlev\'e equation }

\author[a]{Yu Chen}
\author[b,\thanks{Author to whom any correspondence should be addressed, E-mail: xushx3@mail.sysu.edu.cn}] {Shuai-Xia Xu}
\author[a]{Yu-Qiu Zhao}
\affil[a]{ Department of Mathematics, Sun Yat-sen University, Guangzhou, 510275,
China}
\affil[b]{Institut Franco-Chinois de l'Energie Nucl\'{e}aire, Sun Yat-sen University, Guangzhou, 510275, China}

\date{}

\maketitle
\begin{abstract}
In the paper, we consider the extended  Gross-Witten-Wadia unitary matrix model by introducing a logarithmic term in the potential. The partition function of the model can be expressed equivalently in terms of the Toeplitz determinant with the $(i,j)$-entry being the modified Bessel functions of order $i-j-\nu$,  $\nu\in\mathbb{C}$.
When the degree $n$ is finite, we show that the Toeplitz determinant is described by the isomonodromy $\tau$-function of the Painlev\'{e} III equation. As a double scaling limit, 
we establish an asymptotic  approximation of the logarithmic derivative of the Toeplitz determinant, expressed in terms of  the Hastings-McLeod solution of the inhomogeneous Painlev\'{e} II equation with parameter $\nu+\frac{1}{2}$. The asymptotics of  the leading coefficient and  recurrence coefficient of the associated orthogonal polynomials are also derived. We obtain the results  by applying the Deift-Zhou nonlinear steepest descent method to the Riemann-Hilbert problem for orthogonal polynomials on the Hankel loop.  The main concern here is the construction of a local parametrix at the critical point $z=-1$, where the $\psi$-function of the Jimbo-Miwa Lax pair for the inhomogeneous Painlev\'{e} II equation  is involved.
\\
\newline
  \textbf{2020 mathematics subject classification:}  33E17; 34M55; 41A60
\newline
  \textbf{Keywords and phrases:} Random Matrices; Toeplitz determinants; Painlev\'{e} equations; Asymptotics
\end{abstract}

\tableofcontents

\noindent

\section{Introduction}

Consider the  following partition function of the matrix model
\begin{equation}\label{eq:URMT}
Z_{n,\nu}(t)=\frac{1}{n!}\left(\prod_{k=1}^{n}\int_{\Gamma}\frac{ds_{k}}{2\pi is_{k}}\right)\Delta(\mathbf{s})\Delta(\mathbf{s}^{-1})e^{\frac{t}{2}\sum\limits_{k=1}^{n} (s_{k}+s_{k}^{-1})+\nu \log s_k},
\end{equation}
where $\nu\in\mathbb{C}$, $t>0$,  $\Delta(\mathbf{s})=\prod_{i<j}(s_{i}-s_{j})$ is the Vandermonde determinant and  the branch of $\log s_k$ is chosen such that $\arg s_k\in(-\pi,\pi)$.  Here, the integral path $\Gamma$ denotes the Hankel  loop as shown in Figure \ref{hankelloop}, which starts at $-\infty$, encircles the origin once in the positive direction and returns to  $-\infty$.
\begin{figure}[h]
  \centering
  \includegraphics[width=6.5cm,height=4cm]{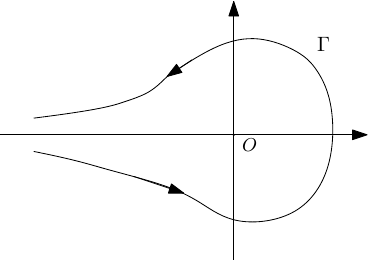}\\
  \caption{The Hankel loop  $\Gamma$}\label{hankelloop}
\end{figure}
The matrix model has been obtained recently in the  studies of  irregular conformal blocks  \cite{IOY,IOY19-2, IOY10}. The matrix model \eqref{eq:URMT} has also been considered in the studies of  random matrix average over unitary group with Haar measure earlier in \cite{FW-2003,FW}.
When $\nu$ is an integer, the contour is reduced to the positively oriented unit circle around the origin,
and the model \eqref{eq:URMT} becomes the unitary  matrix model extending  the classical Gross-Witten-Wadia unitary matrix model \cite{GW,W} by introducing the extra logarithmic term in the potential \cite{FW}.

The partition function \eqref{eq:URMT} can be expressed equivalently in terms of the Toeplitz determinant and in terms of certain   orthogonal polynomials. To this aim, we introduce the systems of  monic orthogonal polynomials   $\{\pi_n\}_{n\in\mathbb{N}}$ and $ \{\tilde{\pi}_n\}_{n\in\mathbb{N}}$ defined by the orthogonality on the Hankel loop depicted in Figure \ref{hankelloop}:
 \begin{equation}\label{eq:Ortho}
\int_{\Gamma} \pi_n(s) \tilde{\pi}_m(s^{-1}) w(s) \frac{ds}{2\pi is}=h_n\delta_{n,m}, \end{equation}
where $h_n$ is a constant, and
 \begin{equation}\label{eq: weight}
w(z)=e^{\frac{t}{2}(z+\frac{1}{z})+\nu \log z}, \quad \nu\in \mathbb{C}, 
\end{equation}
with the branch of $\log z$ chosen such that $\arg z\in(-\pi,\pi)$.
For integer $\nu$, the orthogonality \eqref{eq:Ortho} is equivalent to a non-Hermitian orthogonality on the unit circle oriented counterclockwise, which was studied in \cite{FW}.
We denote
$$m_k=m_k(t)=\int_{\Gamma} s^{k} w(s) \frac{ds}{2\pi is}, \quad k\in\mathbb{Z}.$$
The Toeplitz determinant associated with the weight function \eqref{eq: weight} can be defined by
\begin{equation}\label{eq: Top}
D_{n,\nu}(t)=\det\left(m_{j-i}\right)_{i,j=0}^{n-1}. \end{equation}
Then, the partition function \eqref{eq:URMT} can be expressed  in terms of the  Toeplitz determinant
\begin{equation}\label{eq: TopZ}
Z_{n,\nu}(t)= D_{n,\nu}(t).
 \end{equation}
From the integral representation of the $I$ Bessel function \cite{AS,Olver}, we have
\begin{equation}\label{eq: Bes}
m_k(t)=I_{-k-\nu}(t), \end{equation}
and
\begin{equation}\label{TopBes}
Z_{n,\nu}(t)= D_{n,\nu}(t)=\det \left(I_{i-j-\nu}(t)\right)_{i,j=0}^{n-1}.
\end{equation}
Here $I_{-\nu}(t)$ denotes the $I$ Bessel function of order $-\nu\in\mathbb{C}$:
$$I_{-\nu}(t)=\int_{\Gamma}  w(s) \frac{ds}{2\pi is},$$
where $w(z)$ is defined in \eqref{eq: weight} and $\Gamma$ is the Hankel loop depicted in Figure \ref{hankelloop}; see \cite[(9.6.20)]{AS}.

Suppose the determinant $D_{n,\nu}(t)$ ($D_{n,\nu}$, for short) does not vanish,
then the orthogonal polynomials can  be constructed explicitly as follows:
\begin{equation}\label{eq: OPexpre}
\pi_n(z)=\frac{1}{D_{n,\nu}}\left| \begin{array}{cccc}m_0 & m_{1} & \cdots & m_{n} \\m_{-1} & m_0 & \cdots & m_{-1+n} \\ \vdots &  \vdots &  \vdots &  \vdots \\
m_{-n+1} & m_{-n+2} & \cdots & m_{1} \\1 & z & \cdots & z^{n}\end{array} \right|,
 \end{equation}
 and
 \begin{equation}\label{eq: tildeOPexpre}
 \tilde{\pi}_n(z)=\frac{1}{D_{n,\nu}}\left| \begin{array}{cccc}m_0 & m_{-1} & \cdots & m_{-n} \\m_1 & m_0 & \cdots & m_{1-n} \\ \vdots &  \vdots &  \vdots &  \vdots \\
m_{n-1} & m_{n-2} & \cdots & m_{-1} \\1 & z& \cdots & z^n\end{array} \right|.
\end{equation}
The equations \eqref{eq:Ortho} and \eqref{eq: OPexpre} then imply  the relation
\begin{equation}\label{eq: HTop}
h_n=\frac{D_{n+1,\nu}}{D_{n,\nu}}. \end{equation}
The monic orthogonal polynomials $\pi_{n}(z)$ and $\widetilde{\pi}_{n}(z)$ satisfy the recurrence relations as follows:
\begin{equation}\label{three-term1}
\pi_{n+1}(z)=z\pi_{n}(z)+\pi_{n+1}(0)\,\widetilde{\pi}^{*}_n(z), \quad \widetilde{\pi}^{*}_{n+1}(z)=\widetilde{\pi}^{*}_n(z)+\widetilde{\pi}_{n+1}(0)z\pi_{n}(z)
\end{equation}
where  $\widetilde{\pi}^*_{n}(z)=z^{n}\widetilde{\pi}_{n}(z^{-1})$ is the reversed polynomial associated with $\widetilde{\pi}_{n}(z)$.

The Toeplitz determinant \eqref{TopBes} has important applications in random matrix theory.
According to  Gessel's formula \cite{G}, the Toeplitz determinant \eqref{TopBes} with the parameter $\nu=0$ can be used to represent the distribution of the length of the longest increasing subsequence of random permutations of the numbers $1,2,\cdots, n$.
In the seminal work   \cite{BDJ},  Baik, Deift, and Johansson show that as $n\to\infty$ the scaled distribution  converges to the Tracy-Widom distribution
\begin{equation}\label{eq:TW}
F(s)=\exp\left(-\int_s^{+\infty}(x-s)u_{HM}(x)^2dx\right),
\end{equation}
where $u_{HM}(x)$ is the Hastings-McLeod solution of the homogeneous Painlev\'{e} II equation
\begin{equation}\label{eq:HPainleve2}
u_{HM}''=2u_{HM}^{3}+xu_{HM},  \qquad u_{HM}(x)\sim \Ai(x), \quad x\to\infty.
\end{equation}
They obtain the above celebrated result by exploring  the asymptotics of the Toeplitz determinant, which is shown to be described by
the Hastings-McLeod solution of the homogeneous Painlev\'{e} II equation in a certain double scaling limit  by using the Riemann-Hilbert method.

For general parameter $\nu$, the  Toeplitz determinant with entries given in terms of the $I$ Bessel functions \eqref{TopBes} has been considered in the studies of  random matrix average over unitary groups  \cite{F, FW-2002,FW-2003,FW}. The  Toeplitz determinant is identified therein as the $\tau$-function of the Painlev\'e III equation by using
Okamoto's development of the theory of Painlev\'e  equations \cite{OK}. Recently, the determinant with the entries  expressed in terms of the $K$ Bessel functions, instead of $I$ Bessel functions,  appears in the studies of  linear statistic of the inverse eigenvalues of  Laguerre unitary ensemble and   is related to the Painlev\'e III equation both for finite dimension  $n$ and in  the large-$n$ limit \cite{DFX}; see also \cite{CI, XDZ,XDZ2015}.

Recently, the matrix model \eqref{eq:URMT} is investigated in \cite{IOY,IOY19-2,IOY10}, motivated by the applications in the studies of  irregular conformal blocks.  For general integer parameter $\nu$, they first show that for fixed $n$ the partition function \eqref{eq:URMT} is  identified with the $\tau$-function of the Painlev\'e III equation.  Then, by
conducting appropriate double scaling analysis of the associated discrete Painlev\'e system, they formally derive the inhomogeneous Painlev\'{e} II equation for a certain  scaling function as $n$ tends to infinity.
It is worth  mentioning  that
the solution of the inhomogeneous Painlev\'{e} II equation has also arisen in the studies of multi-critical  Hermitian random matrix ensembles when the spectral singularity  collides with the interior singularity where the global eigenvalue density function vanishes quadratically; see \cite{CKV,CK}.

In recent years, there has been great interest  in  studying the  asymptotic behavior of  Toeplitz determinants.
The asymptotics   of Toeplitz determinants associated with general weight functions on the unit circle perturbed by several  Fisher-Hartwig singularities have been derived in \cite{DIK2,DIK}.  When the location of the Fisher-Hartwig singularities varies with the size of the determinants, the  transition asymptotics of the Toeplitz determinants are established in several different situations, such as the  emergence of a Fisher-Hartwig singularity \cite{CIK}, the merging of two Fisher-Hartwig singularities \cite{CK2015}, the emergence of a gap on the unit circle \cite{CC} and  the closing of a gap to which 
 a  Fisher-Hartwig singularity belongs 
 \cite{XZ}.

In the present paper, we are concerned with the partition function \eqref{eq:URMT} or equivalently the Toeplitz determinant  \eqref{TopBes}  with  entries given in terms of $I$ Bessel functions and general parameter $\nu\in\mathbb{C}$ therein.
When the degree $n$ is finite, we show that the Toeplitz determinant is described by the isomonodromy $\tau$-function of the Painlev\'{e} III equation. In the double scaling limit as the degree $n$ tends to infinity and $\frac{t}{n}\rightarrow1$, we establish an asymptotic  approximation of the Toeplitz determinant, expressed in terms of  the Hasting-McLeod solution of the inhomogeneous Painlev\'{e} II equation.  The asymptotics of  the leading coefficient and  recurrence coefficient of the associated orthogonal polynomials are also derived.
We obtain the results by applying the Deift-Zhou nonlinear steepest descent method developed in \cite{D1,DKMVZ1,DKMVZ2,DZ1}
to the  Riemann-Hilbert problem for orthogonal polynomials defined by  the orthogonality on the Hankel loop \eqref{eq:Ortho}.
The main focus  is the construction of a local parametrix at the critical point $z=-1$, where the $\psi$-function of the Jimbo-Miwa Lax pair for the inhomogeneous Painlev\'{e} II equation  is involved.

\subsection{Statement of results}
Our first result shows that, for fixed degree $n$, the ratio of the contiguous recurrence coefficients
satisfies the Painlev\'{e} III equation, and the Toeplitz determinant
equals to the isomonodromy $\tau$-function of the Painlev\'{e} III equation up to a constant factor.
\begin{thm}\label{thm1}
Let $\nu\in\mathbb{C}$ and $\pi_{n}$ be the $n$-th monic orthogonal polynomials defined by \eqref{eq:Ortho}, and
\begin{equation}\label{recu-coeff}
a_{n}\left(\frac{t}{2}\right)=-\frac{\pi_{n+1}(0)}{\pi_{n}(0)}.
\end{equation}
Then $a_{n}(t)$ satisfies the Painlev\'{e} III equation
\begin{equation}\label{PIIIequ}
a''_{n}=\frac{(a'_{n})^{2}}{a_{n}}-\frac{a'_{n}}{t}+\frac{4}{t}
\left((\nu-n)a_{n}^{2}+n+\nu+1\right )+4a_{n}^{3}-\frac{4}{a_{n}}.
\end{equation}
Moreover,  we have
\begin{equation}\label{fixed-dn}
D_{n,\nu}(t)=c~e^{\frac{1}{8}t^2}\tau \left(\frac{t}{2}\right),
\end{equation}
where $c$ is a constant and $\tau(t)$  is the Jimbo-Miwa-Ueno isomonodromy $\tau$-function of the  Painlev\'{e} III equation \eqref{PIIIequ}.
\end{thm}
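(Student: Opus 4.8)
The plan is to realize the orthogonal polynomials through a $2\times2$ Riemann--Hilbert problem, to extract from its $z$- and $t$-dependence an isomonodromic Lax pair whose compatibility equation is Painlev\'e III, and then to identify the Toeplitz determinant with the corresponding Jimbo--Miwa--Ueno $\tau$-function. Concretely, following Fokas--Its--Kitaev I would introduce the matrix $Y(z)=Y(z;t)$, analytic in $\mathbb{C}\setminus\Gamma$, with jump $Y_{+}(s)=Y_{-}(s)\bigl(\begin{smallmatrix}1 & s^{-1}w(s)\\ 0 & 1\end{smallmatrix}\bigr)$ on $\Gamma$ and normalization $Y(z)z^{-n\sigma_{3}}\to I$ as $z\to\infty$, with $\sigma_{3}=\operatorname{diag}(1,-1)$. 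Its entries encode $\pi_{n}$, $\pi_{n-1}$, the reversed polynomial $\widetilde\pi_{n-1}^{*}$ and their Cauchy transforms on $\Gamma$, so that the recurrence data $\pi_{n}(0)$, $\pi_{n\pm1}(0)$, $h_{n}$ are read off from $Y(0)$ and from the expansion $Y(z)=(I+Y_{1}z^{-1}+\cdots)z^{n\sigma_{3}}$, with \eqref{eq: HTop} expressing the off-diagonal part of $Y_{1}$ through the $D_{n,\nu}$'s. Since $w'(z)/w(z)=\tfrac{t}{2}-\tfrac{t}{2}z^{-2}+\nu z^{-1}$ is rational, the gauge transform $\Psi(z)=Y(z)\,e^{-\frac{t}{4}(z+1/z)\sigma_{3}}z^{-\frac{\nu}{2}\sigma_{3}}$ has piecewise-constant jumps, so $A(z):=\partial_{z}\Psi\cdot\Psi^{-1}$ is rational in $z$ with irregular singular points of Poincar\'e rank one at $z=0$ and $z=\infty$, and its Laurent coefficients are explicit rational expressions in $\pi_{n}(0),\pi_{n\pm1}(0),h_{n},n,\nu,t$.

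Because the $t$-dependence enters only through the exponents of the two essential singularities, the monodromy and Stokes data of $\Psi$ are independent of $t$, so $\Psi$ also solves $\partial_{t}\Psi=B(z)\Psi$ with $B$ rational in $z$, and the zero-curvature condition $\partial_{t}A-\partial_{z}B+[A,B]=0$ holds. Matching Laurent coefficients at $z=0$ and $z=\infty$ yields a closed first-order system in $t$ for the entries of $A$; this is precisely the Lax pair for the generic ($D_{6}$-type) Painlev\'e III equation in the Jimbo--Miwa classification. Eliminating the auxiliary variables, and using the discrete (Toda/string) relations among $\pi_{n}(0),\pi_{n\pm1}(0),h_{n}$ obtained by integrating \eqref{eq:Ortho} by parts against the Pearson identity $z^{2}w'(z)=(\tfrac{t}{2}z^{2}+\nu z-\tfrac{t}{2})w(z)$, one finds, after the rescaling $t\mapsto t/2$, that the function $a_{n}$ defined by \eqref{recu-coeff} solves \eqref{PIIIequ}, the parameters $(\alpha,\beta,\gamma,\delta)=(4(\nu-n),\,4(n+\nu+1),\,4,\,-4)$ being forced by the residues of $A$ at $0$ and $\infty$.

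For the $\tau$-function I would compute the Jimbo--Miwa--Ueno one-form $\mathrm{d}\log\tau$, which is the residue pairing of the formal expansions of $A(z)$ at the two irregular points with the $t$-derivatives of the corresponding exponents, expressing it through $Y_{1}$ and $Y(0)$ and hence through $\tfrac{d}{dt}\log D_{n,\nu}$; the latter is evaluated from the Bessel identity $\partial_{t}m_{k}=\tfrac12(m_{k-1}+m_{k+1})$ together with the cofactor expansion of the determinant \eqref{eq: Top} (equivalently, the Toda equations). Comparison gives $\tfrac{d}{dt}\log\tau(t/2)=\tfrac{d}{dt}\log D_{n,\nu}(t)-\tfrac{t}{4}$, and integrating in $t$ produces \eqref{fixed-dn} up to the multiplicative constant $c$.

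The Riemann--Hilbert setup and the zero-curvature bookkeeping are routine. The genuinely delicate points, where I expect essentially all of the work to lie, are: (i) keeping precise track of the rescaling $t\leftrightarrow t/2$ and of the chosen normalization of the Painlev\'e III $\tau$-function --- Painlev\'e III admits several gauge-inequivalent Lax pairs ($D_{6},D_{7},D_{8}$) and several $\tau$-function conventions, so care is needed for the coefficients in \eqref{PIIIequ} and the clean identity \eqref{fixed-dn} to come out exactly as stated; and (ii) the simultaneous presence of two rank-one irregular singularities, at $z=0$ and $z=\infty$, which requires working with the formal solutions and Stokes structure at both points rather than with a Fuchsian system, and with the branch structure introduced by $z^{-\nu\sigma_{3}/2}$ for non-integer $\nu$.
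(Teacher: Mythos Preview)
Your proposal is correct and follows essentially the same route as the paper: the Fokas--Its--Kitaev RH problem for $Y$, a gauge transform producing a Lax pair with two rank-one irregular singularities at $0$ and $\infty$, Painlev\'e~III from compatibility, and the Jimbo--Miwa--Ueno one-form compared against a differential identity for $\log D_{n,\nu}$. The paper short-circuits your direct zero-curvature elimination by matching the gauged problem (via $\lambda=-iz$, $x=t/2$, plus constant conjugations) to the standard $\mathrm{P}_{\mathrm{III}}$ RH problem of \cite[Ch.~5.3]{FIKN} and then quoting \eqref{PIIIequ}, and it derives the identity $\tfrac{d}{dt}\log D_{n,\nu}=-\tfrac12(a_{n,n-1}+\tilde a_{n,n-1})$ through the orthogonality relations rather than the Bessel moment recurrence you suggest; note two small slips in your setup --- the jump entry should be $s^{-n}w(s)$, not $s^{-1}w(s)$, and the exponent in your gauge should be $+\tfrac{t}{4}(z+z^{-1})\sigma_{3}$ (and include a factor $z^{(\nu-n)/2\,\sigma_{3}}$) for the jumps of $\Psi$ to become piecewise constant.
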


\begin{rem}The identification of the Toeplitz determinant with the $\tau$-function of the Painlev\'{e} III equation \eqref{fixed-dn} is derived earlier in
 \cite[Proposition 2]{FW-2003}
 and \cite[Proposition 3.2]{FW} by using Okamoto's development of the theory of Painlev\'e  equations.
 According to Okamoto's theory of Painlev\'{e} III equation, there exists a sequence of $\tau$-functions expressed in terms of the Toeplitz determinant of a certain linear combination of  the modified  Bessel functions; see \cite{DFX, FW-2003,FW} and the references therein.
Setting $\sigma(t)=t\frac{d}{dt}\log\tau(t)$, then $\sigma(t)$
 satisfies the  $\sigma$-form of Painlev\'e III equation \cite[(C.29)]{JM}:
 \begin{equation}\label{eq:sigmaform}
 \left(t\sigma''-\sigma'\right)^2=4(2\sigma-t\sigma')((\sigma')^2-4t^2)+2(\theta_0^2+\theta_{\infty}^2)((\sigma')^2+4t^2)-16\theta_0\theta_{\infty} t\sigma',
\end{equation}
with $\theta_0=\nu-n$ and $\theta_{\infty}=-(n+\nu)$. Therefore, the logarithmic derivative of $D_{n,\nu}(t)$ given in \eqref{TopBes} provides a classical solution of the differential equation \eqref{eq:sigmaform}.
  \end{rem}

To state our result on the asymptotics of  the Toeplitz determinant $D_{n,\nu}(t)$ and the recurrence coefficient of the orthogonal polynomials,
we need a family of solutions of the general Painlev\'e II equation
\begin{equation}\label{Painleve2}
u''=2u^{3}+xu +\left(\nu+\frac{1}{2}\right),  \qquad \nu\in \mathbb{C}, 
\end{equation}
and the Hamiltonian associated with the solutions
\begin{equation}\label{eq:H}
H(x;\nu)=\frac{1}{2}z(z+x)+zu^2-\nu u, 
\end{equation}
with
\begin{equation}\label{eq:z}
z=u'-u^2-\frac{x}{2}.\end{equation}
The asymptotic behaviors of this family of solutions and the Hamiltonians are stated in the following theorem.

\begin{thm}\label{thm: PIIasy}There exist one parameter family of solutions to the general Painlev\'e II equation \eqref{Painleve2} with the parameter $ \nu\in \mathbb{C}$ such that
\begin{equation}\label{eq:uasyinfty}
u(x;\nu)=-\frac{\nu+\frac{1}{2}}{x}+O\left(x^{-\frac{5}{2}}\right),  \quad x\to+\infty\end{equation}
and
\begin{equation}\label{eq:uasyneginfty}
u(x;\nu)=\left\{\begin{aligned}
&\sqrt{-\frac{x}{2}}+\frac{\nu+\frac{1}{2}}{2x}+O\left((-x)^{-\frac{5}{2}}\right),\quad &\nu \notin\mathbb{N}, \quad &x \rightarrow-\infty,\\
&-\sqrt{-\frac{x}{2}}+\frac{\nu+\frac{1}{2}}{2x}+O\left((-x)^{-\frac{5}{2}}\right),\quad &\nu\in\mathbb{N}, \quad &x \rightarrow-\infty.\\ \end{aligned}
\right.
\end{equation}
Moreover, the  Hamiltonians associated with these solutions satisfy the asymptotic behaviors
\begin{equation}\label{eq:Hasyinfty}
H(x;\nu)=-\frac{1}{8}x^{2}+\frac{\nu^2-\frac{1}{4}}{2x}+O\left(x^{-3}\right), \quad x\to+\infty,\end{equation}
and
\begin{equation}\label{eq:Hasyneginfty}
H(x;\nu)=\left\{\begin{aligned}
&-\nu \sqrt{-\frac{x}{2}}+O\left((-x)^{-1}\right),\quad &\nu \notin\mathbb{N}, \quad &x \rightarrow-\infty,\\
&\nu \sqrt{-\frac{x}{2}}+O\left((-x)^{-1}\right),\quad &\nu \in\mathbb{N}, \quad &x \rightarrow-\infty.\\ \end{aligned}
\right.
\end{equation}
\end{thm}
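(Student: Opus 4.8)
The plan is to realize the solutions $u(x;\nu)$ through the isomonodromy Riemann--Hilbert problem attached to the Jimbo--Miwa Lax pair for the inhomogeneous Painlev\'e II equation \eqref{Painleve2}, and to read off the asymptotics \eqref{eq:uasyinfty}--\eqref{eq:Hasyneginfty} from a Deift--Zhou steepest descent analysis of that problem as $x\to+\infty$ and $x\to-\infty$. Concretely, one sets up the $2\times2$ RH problem for $\Psi=\Psi(\lambda;x)$ with piecewise constant Stokes jumps on six rays emanating from $\lambda=0$, a regular singular point at $\lambda=0$ carrying the parameter $\nu+\tfrac12$, and the irregular normalization $\Psi(\lambda;x)=\bigl(I+O(\lambda^{-1})\bigr)e^{-i\theta(\lambda)\sigma_3}$ as $\lambda\to\infty$, with $\theta(\lambda)=\tfrac43\lambda^3+x\lambda$. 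Compatibility of the $\lambda$- and $x$-parts of the Lax pair forces the relevant entry of the $\lambda^{-1}$-coefficient of $\Psi$ at $\lambda=\infty$ to solve \eqref{Painleve2}, and the quantities $z$ of \eqref{eq:z} and $H$ of \eqref{eq:H} are extracted from the same expansion. Fixing the Stokes multipliers to the concrete values defining the Hastings--McLeod solution of \eqref{Painleve2} with parameter $\nu+\tfrac12$ selects one solution $u(x;\nu)$ for each $\nu\in\mathbb{C}$; the associated RH problem is solvable for $|x|$ large (by the steepest descent construction below) and, via a vanishing lemma, for all real $x$, so that $u(x;\nu)$ is a genuine solution of \eqref{Painleve2}, in parallel with the classical Hastings--McLeod solution of \eqref{eq:HPainleve2}.

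As $x\to+\infty$ the phase $\theta$ has no real saddle points --- its critical points $\lambda^2=-x/4$ are purely imaginary --- so after the scaling $\lambda\mapsto x^{1/2}\lambda$ and a deformation onto the steepest descent contours all the Stokes jumps outside a fixed neighbourhood of $\lambda=0$ become exponentially close to the identity. The RH problem reduces to a small-norm perturbation of the local model at $\lambda=0$, which carries the exponent $\nu+\tfrac12$; that model produces the term $-(\nu+\tfrac12)/x$ in $u$, and the perturbative remainder yields \eqref{eq:uasyinfty}. As a consistency check, substituting $u=-(\nu+\tfrac12)/x+v$ into \eqref{Painleve2} and linearizing gives $v''=xv+O(x^{-3})$, so any solution with $v\to0$ has $v=O(x^{-4})$, consistent with \eqref{eq:uasyinfty}.

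The analysis as $x\to-\infty$ is the substantive part. After the scaling $\lambda\mapsto(-x)^{1/2}\lambda$ the phase acquires two real saddle points at $\pm\tfrac12(-x)^{1/2}$, around which one must build local parametrices out of parabolic cylinder functions whose order is controlled by $\nu$, and glue these to an outer parametrix on the band between the saddles and to the local model at $\lambda=0$. The leading term of $u$ comes from the outer parametrix and equals $\varepsilon\sqrt{-x/2}$ for a sign $\varepsilon\in\{+1,-1\}$, while the correction $(\nu+\tfrac12)/(2x)$ comes from the matching against the $\lambda=0$ model; this gives \eqref{eq:uasyneginfty}. The main obstacle --- and the source of the case distinction in \eqref{eq:uasyneginfty} --- is to determine $\varepsilon$ for the Hastings--McLeod monodromy data: generically (for $\nu\notin\mathbb{N}$) one finds $\varepsilon=+1$, but when $\nu$ is a nonnegative integer the parabolic cylinder parametrix at one of the saddles degenerates (its order becomes a nonnegative integer, so the parabolic cylinder function collapses to a Hermite polynomial times a Gaussian and the attendant connection matrix trivializes), which flips the outer parametrix and forces $\varepsilon=-1$. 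Tracking this degeneration, and the accompanying change of the outer parametrix, is where the work concentrates; the result can also be cross-checked against the known connection formulae for \eqref{Painleve2}.

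Finally, \eqref{eq:Hasyinfty} and \eqref{eq:Hasyneginfty} follow by elementary algebra from \eqref{eq:uasyinfty}, \eqref{eq:uasyneginfty} and the companion expansions of $u'$ (obtained by differentiating the asymptotic series, or directly from the same RH analysis). One inserts these into $z=u'-u^2-\tfrac{x}{2}$ and then into $H=\tfrac12 z(z+x)+zu^2-\nu u$. As $x\to+\infty$ one gets $z=-\tfrac{x}{2}+O(x^{-2})$, and after the cancellation of cross terms $H=-\tfrac18x^2+\tfrac{\nu^2-1/4}{2x}+O(x^{-3})$. As $x\to-\infty$ the $O(-x)$ contributions in $z=u'-u^2-\tfrac{x}{2}$ cancel, leaving $z=\varepsilon\nu(-2x)^{-1/2}+O((-x)^{-2})$, so that $\tfrac12 z(z+x)$, $zu^2$ and $-\nu u$ recombine to $H=-\varepsilon\nu\sqrt{-x/2}+O((-x)^{-1})$, which is \eqref{eq:Hasyneginfty} with $\varepsilon=+1$ when $\nu\notin\mathbb{N}$ and $\varepsilon=-1$ when $\nu\in\mathbb{N}$.
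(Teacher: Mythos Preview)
Your high-level plan (steepest descent on the Painlev\'e~II Riemann--Hilbert problem) is correct in spirit, but the concrete realization differs from the paper's in several structural ways, and a couple of the claimed shortcuts would not work as stated.

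First, the RH problem you describe---six rays, a \emph{regular singular point at $\lambda=0$ carrying $\nu+\tfrac12$}, and phase $\theta(\lambda)=\tfrac43\lambda^3+x\lambda$---is the Flaschka--Newell formulation. The paper works instead with the Jimbo--Miwa model RH problem~\ref{RHP: Psi}: the $\psi$-functions are \emph{entire} (no singularity at the origin), and $\nu$ enters through the factor $z^{\nu\sigma_3}$ in the asymptotics at infinity together with the Stokes multipliers \eqref{eq:Sk}. This is not merely cosmetic; it drives every construction below. In particular, your $x\to+\infty$ picture (``everything is exponentially small outside a model at $\lambda=0$'') does not match the paper's analysis. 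After the rescaling \eqref{rescaling+} one must introduce a genuine $g$-function $g(z)=\tfrac13(z^2+1)^{3/2}$ with a cut on $[-i,i]$, an off-diagonal global parametrix \eqref{Solu:PtildeInfty}, and \emph{Airy} parametrices at $z=\pm i$; the term $\tfrac{\nu^2-1/4}{2x}$ in \eqref{eq:Hasyinfty} arises from the Airy matchings \eqref{JR:i}--\eqref{JR:-i}, not from any local model at the origin. Relatedly, the Hamiltonian is read off directly as $H=-(\Psi_1)_{11}$ via \eqref{eq:psi1}, which yields the sharp $O(x^{-3})$ remainder; reconstructing $H$ algebraically from \eqref{eq:uasyinfty} alone would require more terms of $u$ than that expansion supplies.

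Second, for $x\to-\infty$ the paper's construction is leaner and the $\nu\in\mathbb{N}$ case is handled differently from what you propose. Only \emph{one} parabolic-cylinder parametrix is built, at $z_-=-\sqrt{2}/2$, glued to a \emph{diagonal} global parametrix $(z+\sqrt{2}/2)^{\nu\sigma_3}$ (there is no band and no second saddle parametrix, and no model at the origin). Your observation that the PC construction degenerates at $\nu\in\mathbb{N}$ is exactly right---indeed $h_1=\sqrt{2\pi}\,e^{i\pi\nu}/\Gamma(-\nu)$ in \eqref{h0} vanishes there, so \eqref{P-} breaks down---but the paper does not repair this by modifying the parametrix. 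Instead it invokes the symmetry $\Psi(z,x;\nu)=(-1)^{\nu}\sigma_1\Psi(-z,x;-\nu)\sigma_1$ of \eqref{eq: Symm} to reduce $\nu\in\mathbb{N}\setminus\{0\}$ to the already-treated case $-\nu<0$, which is what produces the sign flip in \eqref{eq:uasyneginfty}--\eqref{eq:Hasyneginfty}; the case $\nu=0$ is done separately by a direct small-norm argument (Section~\ref{sec:Psiinfty}, last subsection). Finally, the claim that a vanishing lemma gives solvability ``for all real $x$'' is not asserted and is in fact false for general $\nu$: as noted in Remark~\ref{rem:poles}, $u(x;\nu)$ is known to be pole-free on $\mathbb{R}$ only for $\nu<0$ or $\nu\in\mathbb{N}$; for other $\nu$ the RH problem may fail at isolated real $x$.
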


\begin{rem}\label{rem:poles}
The solution $u(x;\nu)$ determined by the asymptotic behaviors \eqref{eq:uasyinfty} and \eqref{eq:uasyneginfty}
is known as the Hastings-McLeod solution of the general Painlev\'e II equation \eqref{Painleve2}; see \cite[Chapter 11]{FIKN,DZ2}.
The  Hastings-McLeod solution is meromorphic in the complex plane.
For $\nu<0$, it is shown in \cite{CKV} that the Hastings-McLeod solution $u(x;\nu)$ is pole free on the real axis.
By the uniqueness of solution,  it is seen from \eqref{eq:uasyinfty} and \eqref{eq:uasyneginfty} that for $\nu\in \mathbb{N}$
\begin{equation}
u(x;\nu)=-u(x;-\nu-1).
\end{equation}
Therefore, the solution $u(x;\nu)$, $\nu\in \mathbb{N}$ is also  pole free on the real axis.
\end{rem}

Now, we consider the double scaling limit when $n\rightarrow\infty$ and $\tau=\frac{t}{n}\rightarrow1$
such that $\tau -1=O\left(n^{-2/3}\right)$. We show that the asymptotics of the logarithmic derivative
of the Toeplitz determinant $D_{n,\nu}(t)$ generated by \eqref{eq: weight}, and equivalently the Toeplitz determinant  \eqref{TopBes}  with  entries given in terms of $I$ Bessel functions,  can be expressed in terms of the Hamiltonian of the Painlev\'{e} II
equation \eqref{Painleve2}. The asymptotics of  the recurrence coefficient in \eqref{three-term1} and  the leading
coefficient of the orthonormal polynomials with respect to the weight function  \eqref{eq: weight} are also derived.
\begin{thm}\label{thm2}
Let $n\rightarrow\infty$ and $\tau=\frac{t}{n}\rightarrow1$, such that   $\tau -1=O(n^{-2/3})$.
We have the asymptotic approximation of the logarithmic derivative of the Toeplitz determinant
associated with \eqref{eq: weight} \begin{equation}\label{eq:Dasy}
\frac{d}{dt}\log D_{n,\nu}(t)=\frac{t}{2}+ {2^{\frac {2} {3}} }{n^{-\frac 1 3}}
H\left(2^{\frac {2} {3}}n^{\frac 2 3}(\tau-1);\nu\right)+O\left(n^{-\frac 2 3}\right).
\end{equation}
Moreover, we have  the asymptotics of the leading coefficient $\gamma_n=h_n^{-1/2}$ of the $n$-th orthonormal polynomials associated with
\eqref{eq: weight}
\begin{equation}\label{eq:hasy}
\gamma_{n}=1+{2^{-\frac {1} {3}} }{n^{-\frac 1 3}}H\left(2^{\frac{2}{3}}n^{\frac 2 3}(\tau-1);\nu\right )+O\left(n^{-\frac 2 3}\right),
\end{equation}
and the logarithmic derivative of  the recurrence coefficient  in \eqref{three-term1}
\begin{equation}\label{eq:Rasy}
\frac{d}{dt}\log \pi_{n}(0)=-{2^{\frac {2} {3}} }{n^{-\frac 1 3}}u\left(2^{\frac{2}{3}}n^{\frac{2}{3}}(\tau-1);\nu\right)+O\left(n^{-\frac 2 3}\right).
\end{equation}
Here,  $u(s;\nu)$ and $H(s;\nu)$ are respectively the solution of the Painlev\'e II equation and the associated Hamiltonian with the asymptotic behaviors given in Theorem \ref{thm: PIIasy}.
For $\nu<0$ or $\nu\in\mathbb{N}$,  $u(s;\nu)$ and $H(s;\nu)$ are  pole free on the real axis and the error terms are uniform for $n^{ {2}/{3}}(\tau-1)$ in any compact subsets of the real axis. Otherwise, the error terms are uniform for $n^{3/2}(\tau-1)$ bounded and bounded away from the poles of $u(s;\nu)$ on the real axis.
\end{thm}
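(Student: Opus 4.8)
The plan is to run the Deift--Zhou steepest descent method on the Riemann--Hilbert (RH) problem for the orthogonal polynomials $\pi_n,\widetilde\pi_n$. After deforming $\Gamma$ to the unit circle (permissible since $s^{k-1}w(s)$ is analytic on $\mathbb{C}\setminus\{0\}$, so that the essential singularity of $w$ at $s=0$ plays no role), let $Y(z)$ be the standard $2\times2$ Fokas--Its--Kitaev matrix: analytic off $\Gamma$, with jump built from $z^{-1}w(z)$ on $\Gamma$, and $Y(z)=(I+Y_1z^{-1}+\cdots)z^{n\sigma_3}$ as $z\to\infty$. Its entries at $z=\infty$ and $z=0$ determine $h_{n-1},h_n$ and $\pi_n(0),\widetilde\pi_n(0)$ (the Verblunsky-type coefficients), hence $\gamma_n^{-2}=h_n=D_{n+1,\nu}/D_{n,\nu}$ (see~\eqref{eq: HTop}) and the recurrence coefficients in \eqref{three-term1}; evaluating \eqref{eq: OPexpre} at $z=0$ moreover gives $\pi_n(0)=(-1)^nD_{n,\nu+1}/D_{n,\nu}$. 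For the determinant itself I will use the differential identity produced by $\partial_t w=\tfrac12(z+z^{-1})w$ together with the orthogonality \eqref{eq:Ortho}:
\begin{equation}\label{eq:diffid-proposal}
\frac{d}{dt}\log D_{n,\nu}(t)=\frac12\oint_{\Gamma}\Bigl(s+\frac1s\Bigr)\,\mathcal{K}_n(s)\,w(s)\,\frac{ds}{2\pi i s},\qquad \mathcal{K}_n(s)=\sum_{k=0}^{n-1}\frac{\pi_k(s)\,\widetilde\pi_k(s^{-1})}{h_k},
\end{equation}
where $\mathcal{K}_n$ is the Christoffel--Darboux kernel, expressible through $Y$. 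The whole theorem then reduces to the asymptotics of $Y$ (hence of $\mathcal{K}_n$) near $z=\infty,0$ and, crucially, near $z=-1$.

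\noindent The steepest descent is the standard one for the Gross--Witten--Wadia transition. The transformation $Y\mapsto T$ is driven by the $g$-function (equivalently a $\phi$-function) of the equilibrium measure of the external field $\tfrac{\tau}{2}(z+z^{-1})$ on the unit circle; for $\tau\le1$ this measure lives on the whole circle with density $\tfrac1{2\pi}(1+\tau\cos\theta)$, which at $\tau=1$ has a double zero at $z=-1$. Hence $\phi$ is cubic at $z=-1$: in a fixed disc $D(-1,r)$ the natural local coordinate $\zeta$ behaves like a constant times $n^{1/3}(z+1)$, with $n\phi(z)=\tfrac43\zeta^3+x\,\zeta+o(1)$ where $x=2^{2/3}n^{2/3}(\tau-1)+O\bigl(n^{2/3}(\tau-1)^2\bigr)$, and the hypothesis $\tau-1=O(n^{-2/3})$ is precisely what keeps $x$ bounded. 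After opening lenses $T\mapsto S$ (making the jumps exponentially close to $I$ off $\Gamma$ and off $z=\pm1$), one builds a global parametrix $N$ from the Szeg\H{o} function of $w$ --- which absorbs the Fisher--Hartwig jump $e^{2\pi i\nu\sigma_3}$ that $w$ acquires at $z=-1$ from the branch of $\log z$ --- and a local parametrix $P$ in $D(-1,r)$. The construction of $P$ is the core of the argument: because the double zero of the density (the source of the cubic phase) sits exactly at the Fisher--Hartwig point, the Airy parametrix of the generic soft-edge situation must be replaced by one assembled from the $\psi$-function of the Jimbo--Miwa Lax pair of the inhomogeneous Painlev\'e~II equation \eqref{Painleve2} with parameter $\nu+\tfrac12$, the half-integer shift being produced by the matching at $z=-1$ of the regular-singular structure of the Lax pair at $\zeta=0$ against the $z^{\nu}$-behaviour of $w$ and the square root carried by the equilibrium measure. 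Its solvability for all real $x$ in the pole-free cases, and the precise $\zeta\to\infty$ subleading data needed for the matching, are supplied by the connection formulae of the Hastings--McLeod solution recorded in Theorem~\ref{thm: PIIasy}.

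\noindent Matching $P$ against $N$ on $\partial D(-1,r)$ gives $PN^{-1}=I+O(n^{-1/3})$ uniformly there, so $R=(\text{global}/\text{local parametrix})^{-1}S$ solves a RH problem with jumps $I+O(n^{-1/3})$ and therefore $R=I+O(n^{-1/3})$ uniformly in $z$, with a full expansion whose leading correction is an explicit rational function on $\partial D(-1,r)$ fixed by the residue data of the $\psi$-function. Tracing $Y$ back through $S,T,R$ then yields: from the coefficients of $Y$ at $\infty$ the asymptotics of $h_n$ and thus of $\gamma_n$; from $Y(0),Y'(0)$ those of $\pi_n(0),\widetilde\pi_n(0)$; and, inserting the asymptotics of $\mathcal{K}_n$ into \eqref{eq:diffid-proposal} and deforming the contour, the asymptotics of $\tfrac{d}{dt}\log D_{n,\nu}$. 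In each case the $O(n^{-1/3})$ correction is a fixed linear functional of the subleading term of $R$ at $z=-1$, which --- by the isomonodromy meaning of the Lax pair, in the spirit of the Jimbo--Miwa--Ueno $\tau$-function formalism --- evaluates precisely to the Hamiltonian $H(x;\nu)$ of \eqref{eq:H}--\eqref{eq:z}, not merely to $u$; together with the leading $\tfrac t2$ furnished by the $g/\phi$-function this gives \eqref{eq:Dasy}. Combining \eqref{eq:Dasy} with $h_n=D_{n+1,\nu}/D_{n,\nu}$, $\pi_n(0)=(-1)^nD_{n,\nu+1}/D_{n,\nu}$ and the Painlev\'e~II relation $H(x;\nu+1)-H(x;\nu)=-u(x;\nu)$ --- consistent with reading $u(x;\nu)$ directly off the $\psi$-function data --- then produces \eqref{eq:hasy} and \eqref{eq:Rasy}. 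Finally, the uniformity statements follow from keeping track of where $N$ and $P$ stay invertible: the whole real $x$-axis when $\nu<0$ or $\nu\in\mathbb{N}$ (the pole-free cases of Remark~\ref{rem:poles}), so $x=2^{2/3}n^{2/3}(\tau-1)$ may run over compacts, and otherwise the range of $\tau-1$ must be restricted so that $x$ stays bounded away from the poles of $u(\cdot;\nu)$.

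\noindent The principal obstacle is the local parametrix at $z=-1$: identifying the exact model RH problem, verifying that it is solved by the Hastings--McLeod $\psi$-function for the parameter $\nu+\tfrac12$ (which is where the pole-freeness and pole-avoidance from Theorem~\ref{thm: PIIasy} are indispensable), and matching it to $N$ to order $O(n^{-1/3})$ through the conformal map $\zeta(z)$. A secondary but genuine difficulty is the bookkeeping needed to extract the exact leading behaviour of each quantity --- the $\tfrac t2$, the constants $2^{2/3}$ and $2^{-1/3}$, and especially the identification of the Hamiltonian $H$ (rather than $u$) in \eqref{eq:Dasy}--\eqref{eq:hasy} --- from the $g$-function and the $\zeta\to\infty$ expansion of the $\psi$-function.
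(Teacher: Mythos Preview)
Your overall strategy is correct and parallels the paper closely: Deift--Zhou steepest descent on the FIK RH problem, lens opening around the unit circle, a global parametrix built from $(z+1)^{\nu\sigma_3}$, and a local parametrix at $z=-1$ built from the Jimbo--Miwa $\psi$-function for Painlev\'e~II, whose solvability and large-$\zeta$ data are governed by Theorem~\ref{thm: PIIasy}. The final RH problem for $R$ has jumps $I+O(n^{-1/3})$ and the theorem is read off from the leading correction $R_1$.

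There are two genuine differences from the paper's route, and two inaccuracies, that you should be aware of.

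\emph{Differential identity.} Your integral \eqref{eq:diffid-proposal} is correct, but the paper does not evaluate a contour integral of the CD kernel. Instead it telescopes \eqref{eq:diffid-proposal} \emph{before} doing any asymptotics: writing $s\pi_k=\pi_{k+1}+(a_{k,k-1}-a_{k+1,k})\pi_k+\cdots$ and similarly for $\tilde\pi_k$, the sum collapses to the purely algebraic identity
\[
\frac{d}{dt}\log D_{n,\nu}=-\tfrac12\bigl(a_{n,n-1}+\tilde a_{n,n-1}\bigr)
=-\tfrac12\Bigl((Y_{-1})_{11}+\frac{Y_{21}'(0)}{Y_{21}(0)}\Bigr),
\]
so only $Y$ at $z=\infty$ and $z=0$ are needed. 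This is considerably cleaner than carrying the kernel around the contour and localizing near $z=-1$; your route works, but the bookkeeping you flag as a ``secondary difficulty'' is largely eliminated by this reduction.

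\emph{Route to \eqref{eq:Rasy}.} Your derivation via $\pi_n(0)=(-1)^nD_{n,\nu+1}/D_{n,\nu}$ and the B\"acklund relation $H(x;\nu+1)-H(x;\nu)=-u(x;\nu)$ is a valid alternative. The paper instead reads $\pi_n(0)=Y_{11}(0)$ directly and uses $u(x;\nu)=-\frac{d}{dx}\log(\Psi_1)_{12}$ from the Lax pair (see \eqref{eq:psi1}); no shift in $\nu$ is needed. Similarly, $h_n$ is obtained in the paper as $Y_{12}(0)$ rather than from $Y$ at infinity.

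\emph{Two inaccuracies.} First, your claim that $s^{k-1}w(s)$ is analytic on $\mathbb{C}\setminus\{0\}$ is false for $\nu\notin\mathbb{Z}$: $w(s)=s^\nu e^{\frac t2(s+s^{-1})}$ has a branch cut along $(-\infty,0)$, and this is precisely why the paper keeps a jump on $(-\infty,-1)$ after deforming the Hankel loop (see \eqref{eq:Yhatjump}). You cannot simply collapse $\Gamma$ to the unit circle; the cut must be carried through the analysis and is what produces the $\Gamma_3$-jump in the model RH problem~\ref{RHP: Psi}. Second, your explanation that the half-integer shift in \eqref{Painleve2} comes from ``the regular-singular structure of the Lax pair at $\zeta=0$'' is incorrect: the Jimbo--Miwa $\psi$-functions are \emph{entire} in $\zeta$ (Remark~\ref{rem:PsiSolution}); the parameter $\nu$ enters through the Stokes matrices \eqref{eq:Sk} and the $z^{\nu\sigma_3}$ behaviour at infinity \eqref{PsiInfty}, which in turn match the $(z+1)^{\nu\sigma_3}$ of the global parametrix.
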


\begin{rem}\label{rem:PhaseT}
From  \eqref{eq:Dasy}, we see that there is a sign difference in the asymptotics of the Hamiltonian \eqref{eq:Hasyneginfty}. It is  consistent with the fact that $H(s;\nu)=H(s;-\nu)$ for $\nu\in\mathbb{N}$ as shown later in \eqref{eq: SymmH}, and the symmetry relation
\begin{equation}\label{eq:symmetry1}
D_{n,\nu}(t)=D_{n,-\nu}(t)
\end{equation}
for $\nu\in\mathbb{N}$.  The sign difference in \eqref{eq:uasyneginfty} and \eqref{eq:Hasyneginfty} may indicate that the asymptotic behaviors of  $u(s;\nu)$ and $H(s;\nu)$ are sensitive to
the parameter $\nu$ near $\mathbb{N}$. Similar phenomenon has also been observed  in the study of the asymptotics of the increasing tritronqu\'ee solutions of the second Painlev\'e equation \eqref{Painleve2} where the asymptotic behaviors  change dramatically for  the parameter $\nu$ near $\mathbb{Z}$; see \cite[Theorem 1.2]{M} and the discussion after Theorem 1.4 in \cite{M}. It would be interesting to consider the transition asymptotics of $u(s;\nu)$ and $H(s;\nu)$ as $x\to-\infty$ and $\nu\to k$ for given $k\in\mathbb{N}$ in an interrelated manner.
\end{rem}


%
The rest of the paper is arranged as follows. In Section \ref{sec2}, we formulate a Riemann-Hilbert (RH, for short) problem $Y$ for the orthogonal polynomials with respect to the weight function \eqref{eq: weight}, and derive  the differential identity for the Toeplitz determinant.  We then prove Theorem \ref{thm1} at the end of  this Section by relating the RH problem to that of the Painlev\'e III equation. In Section \ref{RHanalysis}, we perform the Deift-Zhou nonlinear steepest descent analysis of the RH problem for $Y$. The main concern is the construction of a local parametrix at the critical point $z=-1$, where the $\psi$-function of the Jimbo-Miwa Lax pair for the Painlev\'{e} II equation is involved.  The construction is  different from \cite{CK,CKV} where the  Flaschka-Newell Lax pair for  the Painlev\'{e} II equation was used.
In Sections \ref{section:model} and \ref{sec:Psiinfty}, we derive the asymptotics  of the one parameter family of  solutions of the  Painlev\'{e} II equation and the associated Hamiltonians as stated in Theorem \ref{thm: PIIasy} by carrying out  a nonlinear steepest descent analysis of the RH problem for the  Jimbo-Miwa Lax pair of the Painlev\'{e} II equation.
Then, the proof  of Theorem  \ref{thm2} is given in Section \ref{sectionDh}. For the convenience of
the reader, we collect the Airy and parabolic cylinder parametrix models    in the Appendix.

\section{Riemann-Hilbert problem for the orthogonal polynomials }\label{sec2}

\subsection{Riemann-Hilbert problem for orthogonal polynomials and differential identity }

\begin{rhp} \label{RHP: Y}
We look for a $2 \times 2$ matrix-valued function $Y(z;n)$ ($Y(z)$, for short) satisfying the properties.
\begin{itemize}
\item[\rm (1)] $Y(z)$ is analytic in $\mathbb{C} \setminus \Gamma$; where the contour $ \Gamma$ is shown in Figure \ref{hankelloop}.
\item[\rm (2)] $Y(z)$  satisfies the jump condition
  \begin{equation}\label{eq:Yjump}
  Y_{+}(z)=Y_{-}(z)
  \begin{pmatrix}
  1 & \frac{w(z )}{z^n} \\
  0 & 1
  \end{pmatrix},\qquad z\in \Gamma,
  \end{equation}
where the weight  function is defined in \eqref{eq: weight}.
\item[\rm (3)] As $z\to \infty$, we have
  \begin{equation}\label{eq:YInfinity}
  Y(z)=\left (I+\frac{Y_{-1}}{z}+O\left (\frac 1 {z^2}\right )\right)
 \begin{pmatrix}
 z^n & 0 \\
 0 & z^{-n}
 \end{pmatrix}.
 \end{equation}
\end{itemize}
\end{rhp}
According to \cite{FIK}, if there exists a solution to the RH problem for $Y$,   the solution is unique and given by
\begin{equation}\label{eq:YSolution}
Y(z)=Y(z;n)=
\begin{pmatrix}
\pi_n(z) &  \frac{1}{2\pi i } \int_{\Gamma}\frac{\pi_n(x) w(x)dx}{x^n(x-z)} \\[.3cm]
-h_{n-1}^{-1} \tilde{\pi}^*_{n-1}(z)&   -\frac{h_{n-1}^{-1}}{2\pi i}  \int_{\Gamma}\frac{\tilde{\pi}^*_{n-1}(x) w(x)dx}{x^n(x-z)}
\end{pmatrix},
\end{equation}
 where $\widetilde{\pi}^*_{n-1}(z)=z^{n-1}\widetilde{\pi}_{n-1}(z^{-1})$ and $\pi_n$, $\tilde{\pi}_{n-1}$  and $h_{n-1}$  are defined by \eqref{eq:Ortho}.
 \begin{rem}
 In the asymptotic analysis of the RH problem for $Y$ given in the next section, we will  transform the RH problem for $Y$ to a small-norm RH problem  for $R$ by a series of invertible transformations. The RH problem for $R$ is  solvable for sufficiently large $n$ and $\tau=\frac t n\to1$ such that $\xi=n^{2/3}(\tau-1)=O(1)$ and $x=\xi$ is not a pole of the Hasting-McLeod solution $u(x;\nu)$ of Painlev\'e II equation \eqref{Painleve2}.
 As shown in Remark \ref{rem:poles}, the Hastings-McLeod solution  $u(x;\nu)$ is pole free on the real line for $\nu<0$ or $\nu\in\mathbb{N}$, and may have poles on the real line for other parameter $\nu\in\mathbb{C}$.
Therefore, the RH problem for $Y$ is solvable under the same conditions as that for $R$ by tracing back the series of invertible transformations.
\end{rem}

For later use, we derive a differential identity for the
logarithmic derivative of the Toeplitz determinant associated with  \eqref{eq: weight}.
\begin{pro} \label{Pro: DifferentialIdentity} We have the following  differential identity
\begin{equation}\label{eq:dD}
\frac{d}{dt}\log D_{n,\nu}=-\frac{1}{2}\left((Y_{-1})_{11}+\frac{Y_{21}'(0;n+1)}{Y_{21}(0;n+1)}\right),
\end{equation}
\begin{equation}\label{eq:hY}
h_{n}=Y_{12}(0;n),
\end{equation}
where $Y_{-1}$ is given in \eqref{eq:YInfinity} and  $Y_{21}'$ denotes the derivative of $Y_{21}$ with respect to $z$.
\end{pro}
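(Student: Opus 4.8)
The plan is to prove both identities \eqref{eq:dD} and \eqref{eq:hY} by direct bookkeeping on the coefficients of the monic orthogonal polynomials, using only the explicit solution \eqref{eq:YSolution}, the normalization \eqref{eq:YInfinity}, the orthogonality \eqref{eq:Ortho}, the differentiation rule $\partial_t w=\tfrac12(z+z^{-1})w$ coming from \eqref{eq: weight}, and the ratio identity \eqref{eq: HTop}.

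First I would set up a dictionary. Write $\pi_n(z)=z^n+p_n z^{n-1}+\cdots$ and $\tilde\pi_n(z)=z^n+\tilde p_n z^{n-1}+\cdots$. From \eqref{eq:YSolution} the $(1,1)$-entry of $Y$ is $\pi_n$, so comparison with \eqref{eq:YInfinity} gives $(Y_{-1})_{11}=p_n$. Again from \eqref{eq:YSolution}, $Y_{21}(z;n+1)=-h_n^{-1}\tilde\pi_n^*(z)$ with $\tilde\pi_n^*(z)=z^n\tilde\pi_n(z^{-1})=1+\tilde p_n z+\cdots+\tilde\pi_n(0)z^n$, so $Y_{21}(0;n+1)=-h_n^{-1}$ and $Y_{21}'(0;n+1)=-h_n^{-1}\tilde p_n$, whence $Y_{21}'(0;n+1)/Y_{21}(0;n+1)=\tilde p_n$. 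Therefore the right-hand side of \eqref{eq:dD} equals $-\tfrac12(p_n+\tilde p_n)$, and it remains to identify $\tfrac{d}{dt}\log D_{n,\nu}$ with the same quantity. For \eqref{eq:hY}, setting $z=0$ in the $(1,2)$-entry of \eqref{eq:YSolution} gives $Y_{12}(0;n)=\int_\Gamma \pi_n(x)x^{-n}w(x)\,\frac{dx}{2\pi i x}$; expanding the monomial $x^{-n}$ in the basis $\{\tilde\pi_m(x^{-1})\}_{m=0}^{n}$ (legitimate since $\tilde\pi_m(x^{-1})$ has exact degree $m$ in $x^{-1}$ and $\tilde\pi_n$ is monic) and applying \eqref{eq:Ortho}, only the $\tilde\pi_n(x^{-1})$-term survives, giving $Y_{12}(0;n)=h_n$.

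Next I would differentiate the norm. From $h_n(t)=\int_\Gamma \pi_n(s)\tilde\pi_n(s^{-1})w(s)\,\frac{ds}{2\pi i s}$, differentiating in $t$ and using $\partial_t w=\tfrac12(s+s^{-1})w$ produces three terms; the two involving $\partial_t\pi_n$ and $\partial_t\tilde\pi_n$ vanish by \eqref{eq:Ortho}, since (because $\pi_n$ and $\tilde\pi_n$ are monic) these $t$-derivatives are polynomials of degree $\le n-1$ in $s$, resp.\ $s^{-1}$. For the surviving term I would expand $s\pi_n(s)=\pi_{n+1}(s)+c_n\pi_n(s)+\cdots$ in $\{\pi_m\}$ and $s^{-1}\tilde\pi_n(s^{-1})=\tilde\pi_{n+1}(s^{-1})+\tilde c_n\tilde\pi_n(s^{-1})+\cdots$ in $\{\tilde\pi_m(s^{-1})\}$; matching coefficients of $s^n$, resp.\ $s^{-n}$, gives $c_n=p_n-p_{n+1}$ and $\tilde c_n=\tilde p_n-\tilde p_{n+1}$, while the orthogonality pairing against $\tilde\pi_n(s^{-1})$, resp.\ $\pi_n(s)$, leaves exactly those two terms. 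This yields $h_n'=\tfrac12(c_n+\tilde c_n)h_n$, i.e. $\frac{d}{dt}\log h_n=\tfrac12\bigl[(p_n-p_{n+1})+(\tilde p_n-\tilde p_{n+1})\bigr]$.

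Finally, from \eqref{eq: HTop} together with $D_{0,\nu}\equiv1$ one has $D_{n,\nu}=\prod_{j=0}^{n-1}h_j$, so summing the previous relation over $j=0,\dots,n-1$ telescopes to $\frac{d}{dt}\log D_{n,\nu}=\tfrac12\bigl[(p_0-p_n)+(\tilde p_0-\tilde p_n)\bigr]$; since $\pi_0=\tilde\pi_0\equiv1$ we take $p_0=\tilde p_0=0$, obtaining $-\tfrac12(p_n+\tilde p_n)$, which matches the expression for the right-hand side of \eqref{eq:dD} found in the second paragraph. The argument is essentially computational; the only step requiring genuine care is the differentiation of the norm, where one must verify that under the orthogonality pairing multiplication by $s$ (resp.\ $s^{-1}$) contributes only through the two top-index polynomials and that the relevant constants are differences of consecutive subleading coefficients. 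I anticipate no essential obstacle beyond keeping the indices and the basis expansions straight.
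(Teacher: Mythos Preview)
Your proof is correct and follows essentially the same approach as the paper's: both differentiate $h_k$ using $\partial_t w=\tfrac12(s+s^{-1})w$, kill the $\partial_t\pi_k$ and $\partial_t\tilde\pi_k$ terms by orthogonality, extract the subleading-coefficient differences $p_k-p_{k+1}$ and $\tilde p_k-\tilde p_{k+1}$ from the expansions of $s\pi_k(s)$ and $s^{-1}\tilde\pi_k(s^{-1})$, telescope via $D_{n,\nu}=\prod_{j=0}^{n-1}h_j$, and then identify $p_n=(Y_{-1})_{11}$ and $\tilde p_n=Y_{21}'(0;n+1)/Y_{21}(0;n+1)$ from the explicit solution \eqref{eq:YSolution}. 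Your treatment of \eqref{eq:hY} via the basis expansion of $x^{-n}$ in $\{\tilde\pi_m(x^{-1})\}$ is slightly more explicit than the paper's, which simply asserts that it follows directly from \eqref{eq:YSolution}, but the content is the same.
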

\begin{proof} The relation \eqref{eq:hY} follows directly from \eqref{eq:YSolution}. To prove \eqref{eq:dD}, we first derive from \eqref{eq: HTop} that
\begin{equation}\label{eq:DDn}
\frac{d}{dt}\log D_{n,\nu}=\sum_{k=0}^{n-1} \frac{1}{h_k}\frac{d}{dt} h_k.
\end{equation}
  Next, we calculate the derivative of $h_k$ by
using the integral representation \eqref{eq:Ortho}
\begin{align}\nonumber
\frac{d}{dt}h_k=&\frac{d}{dt} \int_{\Gamma} \pi_k(s) \tilde{\pi}_k(s^{-1}) w(s) \frac{ds}{2\pi is}\nonumber \\
=& \int_{\Gamma}\frac{d}{dt}( \pi_k(s)) \tilde{\pi}_k(s^{-1}) w(s) \frac{ds}{2\pi is}+ \int_{\Gamma}\pi_k(s)\frac{d}{dt}(  \tilde{\pi}_k(s^{-1})) w(s) \frac{ds}{2\pi is}\\
&+\int_{\Gamma}\pi_k(s) \tilde{\pi}_k(s^{-1})\frac{d}{dt} w(s) \frac{ds}{2\pi is}\nonumber \\
=&\frac{1}{2}\int_{\Gamma}\pi_k(s)s \tilde{\pi}_k(s^{-1})w(s) \frac{ds}{2\pi is}+\frac{1}{2}\int_{\Gamma}\pi_k(s) \tilde{\pi}_k(s^{-1})s^{-1}w(s) \frac{ds}{2\pi is}.\label{eq:Dh}
\end{align}
Denote
\begin{equation}\label{eq:an}
\pi_n(z)=z^n+a_{n,n-1}z^{n-1}+\cdots, \quad \tilde{\pi}_n(z)=z^n+\tilde{a}_{n,n-1}z^{n-1}+\cdots.
\end{equation}
We have
\begin{equation}\label{eq:rec}
z\pi_n(z)=\pi_{n+1}(z)+(a_{n,n-1}-a_{n+1,n})\pi_n(z)+\cdots,
\end{equation}
\begin{equation}\label{eq:rectilde}
 z\tilde{\pi}_n(z)=\tilde{\pi}_{n+1}(z)+(\tilde{a}_{n,n-1}-\tilde{a}_{n+1,n})\tilde{\pi}_n(z)+\cdots.
\end{equation}
Substituting \eqref{eq:rec} and \eqref{eq:rectilde} into \eqref{eq:Dh} yields
\begin{equation}\label{eq:Integral1}
\int_{\Gamma}\pi_k(s)s \tilde{\pi}_k(s^{-1})w(s) \frac{ds}{2\pi is}= (a_{k,k-1}-a_{k+1,k})h_{k}, \end{equation}
and
\begin{equation}\label{eq:Integral2}
 \int_{\Gamma}\pi_k(s) \tilde{\pi}_k(s^{-1})s^{-1}w(s) \frac{ds}{2\pi is}=(\tilde{a}_{k,k-1}-\tilde{a}_{k+1,k})h_k.
\end{equation}
Thus, we have
\begin{equation}\label{eq:DDnan}
\frac{d}{dt}\log D_{n,\nu}=-\frac{1}{2}(a_{n,n-1}+\tilde{a}_{n,n-1}).
\end{equation}
Comparing  \eqref{eq:YSolution} and \eqref{eq:an}, we have
\begin{equation}\label{eq:aY}
a_{n,n-1}=(Y_{-1})_{11}, \quad  \tilde{a}_{n,n-1}=(Y)_{21}'(0)/(Y)_{21}(0),
\end{equation}
where $Y_{21}'$ denotes the derivative of $Y_{21}$ with respect to $z$.
Therefore, we obtain \eqref{eq:dD}  by substituting \eqref{eq:aY} in \eqref{eq:DDnan}.
This completes the proof of the proposition.

\end{proof}

Next, we relate the RH problem for $Y$ to the known one in the literature for  Painlev\'{e} III equation; see \cite[Chapter 5.3]{FIKN}.
For this purpose, we introduce the new  independent variables
\begin{equation*}
x=\frac{t}{2}\quad \text{and}\quad \lambda=-iz,
\end{equation*}
and define
\begin{equation}\label{YtoPhi}
\Phi(\lambda,x)=e^{-\frac{(n+\nu)\pi i}{4}{\sigma_3}}Y(e^{\frac{1}{2}\pi i}\lambda)e^{\frac{ix}{2}( \lambda-\frac{1}{\lambda})\sigma_{3}}(e^{\frac{1}{2}\pi i}\lambda)^{-(\frac{n-\nu}{2})\sigma_{3}},
\end{equation}
where the branch of the function $\lambda^{-(\frac{n-\nu}{2})}$ is chosen such that $\arg \lambda\in(-\frac{3}{2}\pi,\frac{1}{2}\pi)$. 
Then, $\Phi(\lambda)=\Phi(\lambda,x)$   solves the following RH problem.
\begin{rhp}\label{RHPIII}
\item{(1)} $\Phi(\lambda)$ is analytic for $\lambda\in\mathbb{C}\setminus \{ -i\Gamma\cup (0, i\infty)\}$, where $\Gamma$ is shown in Figure \ref{hankelloop}.

\item{(2)} $\Phi(\lambda)$ satisfies the following jump condition

\begin{equation}\label{eq:Phijumps}
\Phi_{+}(\lambda)=\Phi_{-}(\lambda)\begin{pmatrix}
1 & 1 \\ 0 & 1
\end{pmatrix},\quad \lambda\in -i\Gamma~ ~\mbox{and} ~ ~\Phi_{+}(\lambda)=\Phi_{-}(\lambda)e^{-\pi i(n+\nu)\sigma_3},  \quad \lambda\in (0, i\infty).
\end{equation}

\item{(3)} As $\lambda\to \infty$, we have
\begin{equation}\label{phi-infty}
\Phi(\lambda)=\left(I+\sum_{k=1}^{\infty}\frac{\Phi_{-k}}{\lambda^{k}}\right)
e^{\frac{ix\lambda}{2}\sigma_{3}}\lambda^{\frac{n+\nu}{2}\sigma_{3}},
\end{equation}
where the branch is chosen such that $\arg \lambda\in(-\frac{3}{2}\pi,\frac{1}{2}\pi)$ and the coefficient
\begin{equation}\label{Phi-1}
\Phi_{-1}(x)=-i e^{-\frac{(n+\nu)\pi i}{4}{\sigma_3}}\begin{pmatrix}
a_{n,n-1}+\frac{x}{2} &\pi_{n+1}(0)h_{n} \\
 -\widetilde{\pi}_{n-1}(0)h^{-1}_{n-1} & -a_{n,n-1}-\frac{x}{2}
\end{pmatrix}  e^{\frac{(n+\nu)\pi i}{4}{\sigma_3}}.
\end{equation} 

\item{(4)} As $\lambda\to 0$, we have
\begin{equation}\label{phi-zero}
\Phi(\lambda)=\Phi(0)\left(I+\sum_{k=1}^{\infty}{\Phi_{k}}{\lambda^{k}}\right)
e^{-\frac{ix}{2\lambda}\sigma_{3}}\lambda^{\frac{\nu-n}{2}\sigma_{3}},
\end{equation}
where the branch is chosen such that $\arg \lambda\in(-\frac{3}{2}\pi,\frac{1}{2}\pi)$ and the coefficient
\begin{equation*}
\Phi(0)=
e^{-\frac{(n+\nu)\pi i}{4}{\sigma_3}}\begin{pmatrix}
1 & p_{n} \\ -q_{n} & 1-p_{n}q_{n}
\end{pmatrix}\pi_{n}(0)^{\sigma_{3}}e^{-\frac{(n-\nu)\pi i}{4}{\sigma_3}},
\end{equation*}
with
\begin{equation*}
p_{n}=\pi_{n}(0)h_{n},\quad q_{n}=\frac{1}{h_{n-1}\pi_{n}(0)}.
\end{equation*}
\end{rhp}

It is seen that $\Phi$ satisfies the same RH problem  for  Painlev\'{e} III equation as given in \cite[Chapter 5.3]{FIKN}.
From the RH problem \ref{RHPIII}, we have the following Lax pair
\begin{equation}\label{Phizx}
\Phi_{\lambda}=A(\lambda,x)\Phi\quad\text{and}\quad
\Phi_{x}=B(\lambda,x)\Phi,
\end{equation}
where
\begin{equation}\label{AB}
A(\lambda,x)=\frac{ix}{2}\sigma_{3}+\frac{A_{-1}}{\lambda}+\frac{A_{-2}}{\lambda^2},\quad
B(\lambda,x)=\frac{i}{2}\lambda \sigma_{3}+B_{0}+\frac{B_{-1}}{\lambda}.
\end{equation}
The coefficients in the above formula are given by
\begin{equation}\label{A-1}
A_{-1}=\begin{pmatrix}
\frac{n+\nu}{2} & -xe^{-\frac{(n+\nu)\pi i}{2}}\pi_{n+1}(0)h_{n} \\ -xe^{\frac{(n+\nu)\pi i}{2}} \widetilde{\pi}_{n-1}(0)h^{-1}_{n-1} &-\frac{(n+\nu)}{2}\end{pmatrix},
\end{equation}

\begin{equation}\label{A-2}
A_{-2}=\frac{ix}{2}\begin{pmatrix}
1-2p_{n}q_{n} & -2e^{-\frac{(n+\nu)\pi i}{2}}p_{n} \\ 2e^{\frac{(n+\nu)\pi i}{2}}q_{n}(p_{n}q_{n}-1) &2p_{n}q_{n}-1\end{pmatrix},
\end{equation}
and
\begin{equation}\label{B0}
B_{0}=\frac{1}{x}\left(A_{-1}-\frac{n+\nu}{2}\sigma_{3}\right),\quad
B_{-1}=-\frac{1}{x}A_{-2}.
\end{equation}

To complete  this section, we shall identify the Toeplitz determinant with the $\tau$-function of the Painlev\'{e} III equation, as stated in Theorem \ref{thm1}.
\subsection{Proof of Theorem \ref{thm1}}
According to \eqref{A-1},  \eqref{A-2} and  \cite[(5.3.4), (5.3.7)]{FIKN},  we find that
\begin{equation}\label{PIIIsolu}
a_n(x)=-i(A_{-1})_{12}/(A_{-2})_{12}=-\frac{\pi_{n+1}(0)}{\pi_{n}(0)}
\end{equation}
with $t=2x$, solves the Painlev\'e III equation \eqref{PIIIequ}.

Next, we consider  the Toeplitz determinant  $D_{n,\nu}$.
To this end, we derive from \eqref{three-term1}, \eqref{eq:YSolution} and \eqref{YtoPhi} the coefficient $\Phi_{1}$ in the expansion \eqref{phi-zero}
\begin{equation}\label{phi1}
\Phi_{1}=i\left(\widetilde{a}_{n,n-1}+\frac{x}{2}\right )\sigma_{3}+\begin{pmatrix} 0 & * \\ * & 0\end{pmatrix}.
\end{equation}
According to the general theory of Jimbo-Miwa-Ueno\cite[(1.11)]{JMU}, the isomonodromy $\tau$-function for the Lax pair in
\eqref{Phizx}-\eqref{B0} is defined by
\begin{eqnarray}\label{dlog}
d\log \tau(x)&=&- \Res\limits_{\lambda=0}\; {\rm{tr}} 
\left(\Phi^{-1}_{0}(\lambda)\frac{\partial\,\Phi_{0}(\lambda)}{\partial\lambda}
d\,T_{0}(\lambda)\right)\nonumber\\
&&- \Res\limits_{\lambda=\infty}\; {\rm{tr}}\left(\Phi^{-1}_{\infty}(\lambda)
\frac{\partial\,\Phi_{\infty}(\lambda)}{\partial\lambda}d\,T_{\infty}(\lambda)\right),
\end{eqnarray}
where
\begin{equation*}
d\,T_{0}(\lambda)=-\frac{i}{2\lambda}\sigma_{3}dx,\quad d\,T_{\infty}(\lambda)=\frac{i\lambda}{2}\sigma_{3}dx.
\end{equation*}
Substituting \eqref{Phi-1} and \eqref{phi1} into \eqref{dlog}, we arrive at the equation
\begin{eqnarray}\label{dlogtau}
\frac{d}{dx}\log \tau(x)&=&\frac{i}{2}\; {\rm{tr}}(\Phi_{1}\sigma_{3}-\Phi_{-1}\sigma_{3})\nonumber\\
&=&\frac{i}{2}(2i\widetilde{a}_{n,n-1}+2ia_{n,n-1}+2ix)\nonumber\\
&=&-\widetilde{a}_{n,n-1}-a_{n,n-1}-x.
\end{eqnarray}
From \eqref{eq:DDnan} and \eqref{dlogtau}, we obtain \eqref{fixed-dn}.
This completes the proof of Theorem  \ref{thm1}.



 \section{Asymptotics of  the Riemann-Hilbert problem for orthogonal polynomials}\label{RHanalysis}
 In this section, we perform the Deift-Zhou nonlinear steepest descent analysis of the RH problem for
$Y$.
The analysis includes a series of invertible transformations $Y\rightarrow \widehat{Y}\rightarrow T\rightarrow S\rightarrow R$ such that   the jump matrices in the final RH problem for $R$ are uniformly close to the identity matrix; \cite{D1,DKMVZ1,DKMVZ2,DZ1}.

We first modify the contour in RH problem  for $Y$ and define 
\begin{equation}\label{eq:Yhat}
  \widehat{Y}(z)=\left\{ \begin{array}{ll}
                      Y(z), & \quad z\in \Omega_{-}\cup \Omega_{0} \\
                     Y(z) \begin{pmatrix}1&-\frac{w(z)}{z^{n}}\\0 &1\end{pmatrix}, & \quad z\in \Omega_{+}\backslash\Omega_{0},
                    \end{array}
\right.
 \end{equation}
where the regions and contours are shown in Figure \ref{hankelcontour}. Then we obtain the following RH problem for $\widehat{Y}$.

\begin{figure}[h]
  \centering
  \includegraphics[width=8cm,height=4cm]{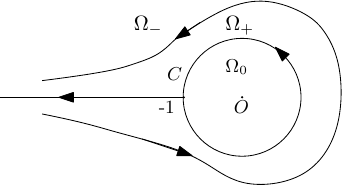}\\
  \caption{Contours for the transformation  $Y\to \widehat{Y}$}\label{hankelcontour}
\end{figure}

 \begin{rhp}\label{RHP: Yhat}

\item[(1)] $\widehat{Y}(z)$ is analytic in $\mathbb{C}\setminus \Sigma_{\widehat{Y}} $, where the contour  $\Sigma_{\widehat{Y}}= (-1,-\infty)\cup C$ are depicted in Figure \ref{Yhatcontour}, of which $C$ is the unit circle centered at the origin.
\item[(2)] $\widehat{Y}(z)$  satisfies $\widehat{Y}_{+}(z)=\widehat{Y}_{-}(z)J_{\widehat{Y}}$, where
  \begin{equation}\label{eq:Yhatjump}
 J_{\widehat{Y}}(z)=\left\{
\begin{aligned}
&\begin{pmatrix}1 & \frac{w(z)}{z^{n}}\\ 0 & 1 \end{pmatrix},\quad &z&\in C,\\
&\begin{pmatrix}1 & \frac{w_{+}(z)-w_{-}(z)}{z^{n}}\\ 0 & 1 \end{pmatrix},\quad &z&\in (-1,-\infty),
\end{aligned}
\right.
  \end{equation}
  with $w_{+}(z)-w_{-}(z)=|z|^{\nu}e^{\frac{t}{2}(z+\frac{1}{z})}(e^{\pi i \nu}-e^{-\pi i \nu})$.

  \item[(3)] As $z\to \infty$, we have
  \begin{equation}\label{eq:UInfinity}
  \widehat{Y}(z)=\left (I+\frac{\widehat{Y}_{-1}}{z}+O\left (\frac 1 {z^2}\right )\right )z^{n\sigma_{3}}.
 \end{equation}
\end{rhp}
\begin{figure}[h]
  \centering
  \includegraphics[width=8cm,height=2.9cm]{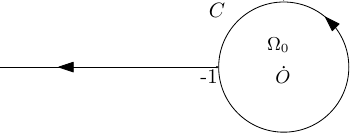}\\
  \caption{The contours $\Sigma_{\widehat{Y}}$ of RH problem for $\widehat{Y}$}\label{Yhatcontour}
\end{figure}

\subsection{Normalization: $\widehat{Y}\to T$}
To normalize the large-$z$ behavior of $\widehat{Y}(z)$, we introduce the transformation
\begin{equation}\label{eq:YhatT}
  T(z)=\left\{ \begin{array}{ll}
                      \widehat{Y}(z)e^{\frac{t}{2z}\sigma_3}z^{-n\sigma_3}, & \hbox{$|z|>1$,}\\
                     \widehat{Y}(z)e^{\frac{t}{2}z\sigma_3}, & \hbox{$|z|<1$.}
                    \end{array}
\right.
 \end{equation}

\begin{rhp} \label{RHP: T}
The function $T(z)$ defined in \eqref{eq:YhatT} satisfies the following RH problem.
\begin{itemize}
\item[\rm (1)] $T(z)$ is analytic in $\mathbb{C} \setminus \Sigma_{\widehat{Y}}$.
\item[\rm (2)] $T(z)$  satisfies the jump condition
  \begin{equation}\label{eq:TJump}
  T_{+}(z)=T_{-}(z)\left\{
  \begin{aligned}
  &\begin{pmatrix}e^{n\phi(z)} & z^{\nu} \\0 & e^{-n\phi(z)}\end{pmatrix},\quad &z&\in C,\\
  &\begin{pmatrix}1 & |z|^{\nu}e^{n\phi(z)}(e^{\pi i \nu}-e^{-\pi i \nu})\\ 0 & 1 \end{pmatrix},\quad &z&\in (-1,-\infty),
  \end{aligned}
\right.
  \end{equation}
  where
   \begin{equation}\label{eq:phi}
  \phi(z)=\frac{t}{2n}(z-z^{-1})+\log z.
 \end{equation}
\item[\rm (3)] As $z\to \infty$, we have
  \begin{equation}\label{eq:TInfinity}
  T(z)=I+O\left (\frac 1 {z}\right ).
 \end{equation}
\end{itemize}
\end{rhp}

\subsection{Deformation: $T\to S$}
It is seen from  \eqref{eq:TJump}  that the diagonal entries of the jump matrix for $T$  are highly oscillating for $n$ large.
 To transform the oscillating entries  to exponential decay ones on certain contours, we introduce
 the second transformation $T\to S$.
The transformation is based on the following factorization of jump matrix
\begin{equation}\label{eq:Factorization}
  \begin{pmatrix}
  e^{n\phi(z)} & z^{\nu} \\
  0 & e^{-n\phi(z)}
  \end{pmatrix}=   \begin{pmatrix}
  1 &0 \\
  z^{-\nu} e^{-n\phi(z)} & 1
  \end{pmatrix}  \begin{pmatrix}
  0& z^{\nu} \\
  -z^{-\nu} &0
  \end{pmatrix} \begin{pmatrix}
  1 &0 \\
  z^{-\nu} e^{n\phi(z)} & 1
  \end{pmatrix}.
  \end{equation}
  We introduce the transformation
  \begin{equation}\label{eq:TS}
  S(z)=\left\{ \begin{array}{lll}
                   T(z) \begin{pmatrix}
  1 & 0 \\
 z^{-\nu} e^{-n\phi(z)} & 1
  \end{pmatrix} , & \hbox{for $z\in  \Omega_E$,}\\[.5cm]
  T(z)\begin{pmatrix}
  1 & 0 \\
  - z^{-\nu} e^{n\phi(z)} &1
  \end{pmatrix}  ,   &   \hbox{for $z\in  \Omega_I$,}\\
     T(z), & \hbox{otherwise.} \\
  \end{array}
\right.
 \end{equation}
Here $\Omega_E$ and $\Omega_I$ denote some lens-shaped regions  outside and inside the unit circle,
 which are shown in Figure \ref{hankelcontourS}.

\begin{rhp} \label{RHP: S}
The function $S(z)$ defined in \eqref{eq:TS} satisfies the following RH problem.
\begin{itemize}
\item[\rm (1)] $S(z)$ is analytic in $\mathbb{C} \setminus \Sigma$, where $\Sigma=\Sigma_E\cup C \cup \Sigma_I\cup(-1,-\infty)$, with $\Sigma_E$ and $\Sigma_I$ the boundary of the lens shaped regions  $\Omega_E$ and $\Omega_I$ as indicated in Figure \ref{hankelcontourS}.
\item[\rm (2)] $S(z)$  satisfies the jump condition
  \begin{equation}\label{eq:SJump}
  S_{+}(z)=S_{-}(z)
  \left\{ \begin{array}{lll}
                    \begin{pmatrix}
  1 & 0 \\
 z^{-\nu} e^{-n\phi(z)} & 1
  \end{pmatrix} , & \hbox{for $z\in  \Sigma_E$,}\\[.5cm]
   \begin{pmatrix}
  0& z^{\nu} \\
  -z^{-\nu} &0
  \end{pmatrix} , & \hbox{for $z\in C$,} \\[.5cm]
 \begin{pmatrix}
  1 & 0 \\
   z^{-\nu} e^{n\phi(z)} &1
  \end{pmatrix}  ,   &   \hbox{for $z\in  \Sigma_I$,}\\[.5cm]
  \begin{pmatrix}1 & |z|^{\nu}e^{n\phi(z)}(e^{\pi i \nu}-e^{-\pi i \nu})\\ 0 & 1 \end{pmatrix},& \hbox{for $z\in  (-1,-\infty)$},\\
  \end{array}
\right.
  \end{equation}
  with $\phi(z)$ defined in \eqref{eq:phi}.
\item[\rm (3)] As $z\to \infty$, we have
  \begin{equation}\label{eq:SInfinity}
  S(z)=I+O\left (\frac 1 {z}\right ).
 \end{equation}
\end{itemize}
\end{rhp}
\begin{figure}[ht]
  \centering
  \includegraphics[width=10.2cm,height=5.5cm]{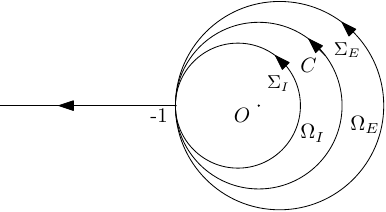}\\
  \caption{Contours and regions for the RH problem for $S(z)$}\label{hankelcontourS}
\end{figure}

For $t= n\tau$, $\tau\to 1$,   it follows  from \eqref{eq:phi} that the $\phi$-function possesses the following properties
\begin{equation}\label{eq:phiSgn}
\left\{ \begin{array}{lll}
\Re  \phi(z)=0,  &z\in C,& \\
  \Re \phi(z)>0, &z\in \Sigma_E, & |z+1|>\delta,\\
     \Re  \phi(z)<0, &z\in \Sigma_I,&  |z+1|>\delta;
  \end{array}
\right.
\end{equation}
see  \cite{BDJ}.

\subsection{Global parametrix: $N$}
It is readily  seen from \eqref{eq:phiSgn}
  that on $\Sigma_I\cup\Sigma_E$ and bounded away from $z=-1$,
 the jump matrices for $S$ tend to the identity matrix exponentially fast as $n\to\infty$.
 Then, we arrive at the following approximate RH problem for $n$ large.

 \begin{rhp} \label{RHP: N}
We look for a $2 \times 2$ matrix-valued function $N(z)$ satisfying the following properties.
\begin{itemize}
\item[\rm (1)] $N(z)$ is analytic in $\mathbb{C} \setminus C$, where $C$ is the unit circle oriented counterclockwise.
\item[\rm (2)] $N(z)$  satisfies the jump condition
  \begin{equation}\label{eq:Njump}
  N_{+}(z)=N_{-}(z)
 \begin{pmatrix}
  0& z^{\nu} \\
  -z^{-\nu} &0
  \end{pmatrix} ,\qquad z\in C.
  \end{equation}
  \item[\rm (3)] As $z\to \infty$, we have
  \begin{equation}\label{eq:NInfinity}
  N(z)=I+O\left (\frac 1 {z}\right ).
 \end{equation}
\end{itemize}
\end{rhp}

The solution to the  RH problem for $N(z)$ can be constructed by elementary functions as follows:
\begin{equation}\label{eq:NSolution}
  N(z)=\left\{ \begin{array}{ll}
                      \left(\frac{z+1}{z}\right)^{\nu\sigma_3}, & \hbox{$|z|>1$,} \\
                     (z+1)^{\nu \sigma_3}\begin{pmatrix}
 0 & 1 \\
  -1 & 0
  \end{pmatrix}, & \hbox{$|z|<1$.}
                    \end{array}
\right.
 \end{equation}
 Here, the branches for the power functions are chosen such that $\arg(z)\in(-\pi, \pi)$ and  $\arg(z+1)\in(-\pi, \pi)$.
\subsection{Local parametrix and the model Riemann-Hilbert problem for $\Psi$}
In this subsection, we intend to construct a local parametrix in a neighborhood of  $z=-1$.
\begin{rhp} \label{RHP: P}
We look for a $2 \times 2$ matrix-valued function $P(z)$ satisfying the following properties.
\begin{itemize}
\item[\rm (1)] $P(z)$ is analytic in $U(-1,\delta)\setminus \Sigma$, where $U(-1,\delta)$ is a neighborhood of  $z=-1$.
\item[\rm (2)] $P(z)$  satisfies the same  jump condition as $S(z)$ on $U(-1,\delta)\cap \Sigma$.
   \item[\rm (3)] On the boundary of $U(-1,\delta)$, $P(z)$ satisfies the matching condition
  \begin{equation}\label{eq:Matching}
P(z)N^{-1}= n^{-\frac{ \nu}{3}\sigma_3}\left(I+O(1/n^{1/3})\right)n^{\frac{ \nu}{3}\sigma_3}. 
 \end{equation}
\end{itemize}
\end{rhp}

Let
\begin{equation}\label{eq:tildeP}
  \tilde {P}(z)=\left\{ \begin{array}{lll}
                     P(z), &  z\in U(-1,\delta), &|z|>1 ,  \\
                    P(z) \begin{pmatrix}
 0 & -z^{\nu} \\
  z^{-\nu} & 0
  \end{pmatrix}, &  z\in U(-1,\delta) , &|z|<1.
                    \end{array}
\right.
 \end{equation}
It is seen from \eqref{eq:SJump} that $\tilde{P}(z)$ satisfies the jump relations
\begin{equation}\label{eq:tildePJump}
   \tilde {P}_{+}(z)= \tilde {P}_{-}(z)
  \left\{ \begin{array}{lll}
                    \begin{pmatrix}
  1 & 0 \\
 z^{-\nu} e^{-n\phi(z)} & 1
  \end{pmatrix} , & \hbox{for $z\in U(-1,\delta)\cap \Sigma_E$,}\\[.5cm]
 \begin{pmatrix}
  1 &  - z^{\nu} e^{n\phi(z)} \\0 &1\end{pmatrix}  ,   &   \hbox{for $z\in U(-1,\delta)\cap \Sigma_I$,}\\[.5cm]
  \begin{pmatrix}1 & |z|^{\nu}e^{n\phi(z)}(e^{\pi i \nu}-e^{-\pi i \nu})\\ 0 & 1 \end{pmatrix},& \hbox{for $z\in  U(-1,\delta)\cap(-1,-\infty)$},\\[.5cm]
  \begin{pmatrix}e^{-2\pi i \nu} & 0\\ 0 & e^{2\pi i \nu} \end{pmatrix},& \hbox{for $z\in  U(-1,\delta)\cap(-1,0)$}.\\
  \end{array}
\right.
  \end{equation}
We seek a solution to the RH problem for $P$ of the following form for $|z+1|<\delta$
 \begin{equation}\label{solu:P}
 P(z)=E(z)\Psi(n^{1/3}f(z), n^{2/3}s(t,z))\left\{\begin{array}{ll} z^{-\frac{\nu}{2}\sigma_{3}}e^{-\frac{1}{2}\pi i \nu\sigma_{3}} e^{-\frac{n}{2}\phi(z)\sigma_3}M(z), & \arg z\in(0,\pi),\\[.2cm]
 z^{-\frac{\nu}{2}\sigma_{3}}e^{\frac{1}{2}\pi i \nu\sigma_{3}} e^{-\frac{n}{2}(\phi(z)+2\pi i)\sigma_3}M(z), & \arg z\in(-\pi,0),
  \end{array}
\right.
  \end{equation}
where
 \begin{equation}\label{eq:M}
 M(z)=\left\{\begin{array}{lll}
 I, &z\in U(-1,\delta),& |z|>1, \\
 \begin{pmatrix}0&z^{\nu}\\-z^{-\nu}&0\end{pmatrix}, &z\in U(-1,\delta),& |z|<1,
 \end{array}
\right.
 \end{equation}
and the branch cut for $z^{\nu/2}$ is chosen such that $\arg z\in(-\pi,\pi)$.
The function $f(z)$ is defined as  
 \begin{equation}\label{eq:f}
f(z)=\left(\frac{3}{2}\phi_0(z)\right)^{1/3}\sim 2^{-2/3}(z+1), \quad z\to-1,
 \end{equation} and serves as a conformal mapping at $z=-1$,
 where
 \begin{equation*}
 \phi_0(z)=\frac{1}{2}(z-z^{-1})+\log(z)\mp \pi i~~~~\mbox{for}~~ \pm \arg z\in(-\pi,\pi) .
 \end{equation*}
 \begin{rhp} \label{RHP: Psi}
The function $\Psi(z)=\Psi(z,x)$  satisfies the following model RH problem.
\begin{itemize}
\item[\rm (1)] $\Psi(z)$ is analytic in $\mathbb{C} \setminus  \cup_{k=1}^5 ~\Gamma_k$, where
$$\Gamma_1=e^{\frac{\pi}{3}i}\mathbb{R}^{+},~ \Gamma_2=e^{\frac{2\pi}{3}i}\mathbb{R}^{+},~\Gamma_3=e^{\pi i}\mathbb{R}^{+},~\Gamma_4=e^{-\frac{2\pi}{3}i}\mathbb{R}^{+}, ~\Gamma_5=e^{-\frac{\pi}{3}i}\mathbb{R}^{+},$$
as depicted in Figure \ref{ModelRHP}.
\item[\rm (2)] For $z\in  \cup_{k=1}^5 ~\Gamma_k$, $\Psi(z)$  satisfies the jump condition
  \begin{equation}\label{eq:PsiJump}
  \Psi_{+}(z)=\Psi_{-}(z)J_{\Psi}(z), 
  \end{equation}
where
 \begin{equation}\label{eq:Sk}
   J_{\Psi}(z)=
  \left\{ \begin{array}{lll}
  \begin{pmatrix}1 &  e^{-\pi i \nu} \\0 &1\end{pmatrix}  ,   &   z\in\Gamma_{1},\\[.5cm]
 \begin{pmatrix}1 & 0 \\-e^{\pi i \nu} & 1\end{pmatrix} , & z\in\Gamma_{2},\\[.5cm]
 \begin{pmatrix}e^{-2\pi i \nu} & e^{\pi i \nu}-e^{-\pi i \nu}\\ 0 & e^{2\pi i \nu} \end{pmatrix},& z\in\Gamma_{3},\\[.5cm]
     \begin{pmatrix}1 & 0 \\e^{-\pi i \nu} & 1\end{pmatrix} , & z\in\Gamma_{4},\\[.5cm]
     \begin{pmatrix}1 &  -e^{\pi i \nu} \\0 &1\end{pmatrix}  ,   &   z\in\Gamma_{5}.\\[.5cm]
  \end{array}
\right.
  \end{equation}
\item[\rm (3)] As $z\to \infty$, we have
  \begin{equation}\label{PsiInfty}
  \Psi(z)=\left(I+\frac{\Psi_{1}(x;\nu)}{z}+O\left (\frac 1 {z^2}\right )\right) \exp\left(\left(\frac{1}{3}z^3+\frac{x}{2}z\right)\sigma_3\right)z^{\nu\sigma_3},
 \end{equation}
 with $\arg z\in(-\pi,\pi)$.
\end{itemize}
\end{rhp}
\begin{rem}\label{rem}\label{rem:PsiSolution}
The solution $\Psi(z)$ of the model RH problem can be constructed by using the $\psi$-functions  of the Jimbo-Miwa-Garnier Lax pair for the general
Painlev\'e II equation \eqref{Painleve2};  see \cite[Proposition 5.3]{FIKN} and \cite{JM}.   It is worth mentioning that the $\psi$-functions are  entire functions in the variable $z$. Conversely, the solution $u(x;\nu)$ of  \eqref{Painleve2} and  the associated Hamiltonian $H(x;\nu)$
can be expressed in terms of the solution of the model RH problem
 \begin{equation}\label{eq:psi1}
H(x;\nu)=-(\Psi_{1})_{11}(x;\nu)=(\Psi_{1})_{22}(x;\nu), \quad u(x;\nu)=-\frac{d}{dx}\log(\Psi_{1})_{12}(x;\nu)
\end{equation}
with $\Psi_{1}(x;\nu)$ given in \eqref{PsiInfty}. Moreover, the  model RH problem $\Psi(z,x)$ is solvable if and only if $x$ is not a pole of $u(x;\nu)$.
\end{rem}
\begin{figure}[h]
  \centering
  \includegraphics[width=5cm,height=5.6cm]{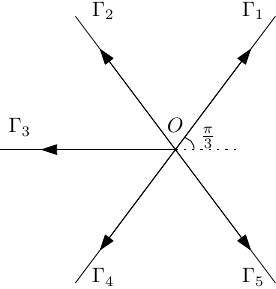}\\
  \caption{Contours and regions for the model RH problem} \label{ModelRHP}
\end{figure}
To fulfill the matching condition \eqref{eq:Matching}, we define
 \begin{equation}\label{eq:st}
s(t,z)=\frac{\phi(z)-\phi_0(z)\mp \pi i}{f(z)}=\frac{\tau-1}{2}\frac{z-z^{-1}}{f(z)}
=2^{2/3}(\tau-1) \left(1+\sum_{k=1}^{\infty}c_k(z+1)^k\right),
 \end{equation}
 for $\pm \arg z\in(0,\pi)$ with $\tau=\frac{t}{n}$.
Now we choose
 \begin{equation}\label{eq:E}
E(z)=n^{-\frac{ \nu}{3}\sigma_3}\left(\frac{z+1}{f(z)}\right)^{\nu \sigma_3}e^{\frac{n\pi i}{2}\sigma_{3}}z^{-\frac{\nu}{2}\sigma_{3}}e^{\pm\frac{1}{2}\pi i\nu\sigma_{3}}, \quad\pm \arg z\in(0,\pi).
 \end{equation}

 \begin{pro}\label{Pro:Match}Let $s(t,z)$ be given by \eqref{eq:st} and $n^{2/3}s(t,z)$ be bounded away from the poles of $u(x;\nu)$,   the matching condition \eqref{eq:Matching} is fulfilled for general parameter $\nu\in \mathbb{C}$.
 \end{pro}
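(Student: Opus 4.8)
The plan is to substitute the explicit formula \eqref{solu:P} for $P(z)$ together with the global parametrix \eqref{eq:NSolution} into the product $P(z)N^{-1}(z)$ and show that the resulting expression is $n^{-\nu\sigma_3/3}(I+O(n^{-1/3}))n^{\nu\sigma_3/3}$ uniformly for $z$ on $\partial U(-1,\delta)$. First I would fix $z$ with $\arg z\in(0,\pi)$ (the case $\arg z\in(-\pi,0)$ being entirely analogous, with the sign conventions in \eqref{solu:P} and \eqref{eq:E} chosen precisely so that the two agree across $(-1,0)$) and assemble the pieces: by \eqref{solu:P}, for $|z|>1$ we have $P(z)N^{-1}(z)=E(z)\,\Psi(n^{1/3}f(z),n^{2/3}s(t,z))\,z^{-\frac{\nu}{2}\sigma_3}e^{-\frac12\pi i\nu\sigma_3}e^{-\frac n2\phi(z)\sigma_3}\,\left(\frac{z+1}{z}\right)^{-\nu\sigma_3}$, and similarly inside the unit circle the extra factor $M(z)$ and the $N$-jump matrix combine so that the same identity holds. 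The key point is that on $\partial U(-1,\delta)$ the argument $\zeta:=n^{1/3}f(z)$ tends to infinity like $n^{1/3}\delta$ (since $f$ is a conformal map with $f(z)\sim 2^{-2/3}(z+1)$ by \eqref{eq:f}), so we may insert the large-$\zeta$ asymptotics \eqref{PsiInfty} of $\Psi$.

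The central computation is to check that the ``non-decaying'' part of $\Psi$'s asymptotics cancels exactly against the conjugating factors. Plugging \eqref{PsiInfty} in with $z\mapsto\zeta=n^{1/3}f(z)$ and $x\mapsto n^{2/3}s(t,z)$ gives
\begin{equation}\label{eq:MatchExpand}
\Psi\bigl(n^{1/3}f(z),n^{2/3}s(t,z)\bigr)=\left(I+\frac{\Psi_1}{n^{1/3}f(z)}+O(n^{-2/3})\right)\exp\left(\left(\tfrac13 n f(z)^3+\tfrac12 n^{2/3}s(t,z)\,n^{1/3}f(z)\right)\sigma_3\right)\bigl(n^{1/3}f(z)\bigr)^{\nu\sigma_3}.
\end{equation}
Now $\tfrac13 f(z)^3=\tfrac12\phi_0(z)$ by \eqref{eq:f}, and by the very definition \eqref{eq:st} of $s(t,z)$ one has $s(t,z)f(z)=\phi(z)-\phi_0(z)\mp\pi i$, so the exponent in \eqref{eq:MatchExpand} equals $\tfrac n2\bigl(\phi_0(z)+\phi(z)-\phi_0(z)-\pi i\bigr)\sigma_3=\tfrac n2(\phi(z)-\pi i)\sigma_3$. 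This precisely cancels the factor $e^{-\frac n2\phi(z)\sigma_3}$ coming from \eqref{solu:P} together with the $e^{n\pi i\sigma_3/2}$ factor built into $E(z)$ in \eqref{eq:E} (using $e^{-n\pi i\sigma_3/2}e^{n\pi i\sigma_3/2}=I$, and absorbing the half-integer ambiguity in $e^{\pm\pi i\nu\sigma_3/2}$ against the matching $e^{\mp\pi i\nu\sigma_3/2}$ in $E$). Likewise the power $\bigl(n^{1/3}f(z)\bigr)^{\nu\sigma_3}=n^{\nu\sigma_3/3}\bigl(\frac{f(z)}{z+1}\bigr)^{\nu\sigma_3}(z+1)^{\nu\sigma_3}$ combines with the factor $\bigl(\frac{z+1}{f(z)}\bigr)^{\nu\sigma_3}$ in $E(z)$, the $z^{-\nu\sigma_3/2}$ and $e^{\pm\pi i\nu\sigma_3/2}$ in $P$, and the $\bigl(\frac{z+1}{z}\bigr)^{-\nu\sigma_3}$ from $N^{-1}$, so that after all cancellations only the prefactor $n^{-\nu\sigma_3/3}$ from $E(z)$ and the $n^{\nu\sigma_3/3}$ from the power of $\Psi$ survive at the two ends. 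Conjugating the remaining error $I+\dfrac{n^{-\nu\sigma_3/3}\Psi_1\,n^{\nu\sigma_3/3}}{n^{1/3}f(z)}+O(n^{-2/3})$ by these factors gives exactly the right-hand side of \eqref{eq:Matching}, since $|f(z)|\ge c\delta>0$ on $\partial U(-1,\delta)$ makes the $O(n^{-1/3})$ uniform.

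The main obstacle is bookkeeping the many diagonal prefactors — the branches of $z^{\nu/2}$, $(z+1)^\nu$, $(f(z))^\nu$, the $e^{\pm\pi i\nu\sigma_3/2}$ and $e^{\pm n\pi i\sigma_3/2}$ terms, and the $\mp\pi i$ jumps hidden in $\phi_0$ — and verifying that on the two rays $\arg z\in(0,\pi)$ and $\arg z\in(-\pi,0)$ they assemble into the \emph{same} $E(z)$-conjugated quantity, so that $P(z)N^{-1}(z)$ has no spurious jump on $\partial U(-1,\delta)$. A secondary point needing care is that the whole argument presupposes $n^{2/3}s(t,z)=2^{2/3}n^{2/3}(\tau-1)(1+O(z+1))$ stays in a region where $\Psi(\zeta,x)$ is defined and its asymptotics \eqref{PsiInfty} hold uniformly; this is exactly the hypothesis ``$n^{2/3}s(t,z)$ bounded away from the poles of $u(x;\nu)$'' in the statement, which by Remark~\ref{rem:PsiSolution} is equivalent to the solvability of the model RH problem, and combined with $\tau-1=O(n^{-2/3})$ (so $x$ stays bounded) it guarantees that the error term in \eqref{PsiInfty} is controlled independently of $z\in\partial U(-1,\delta)$. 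This completes the verification of the matching condition for general $\nu\in\mathbb{C}$.
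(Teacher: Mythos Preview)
Your proposal is correct and follows essentially the same route as the paper: both insert the large-argument asymptotics \eqref{PsiInfty} for $\Psi$, use the algebraic identity $\tfrac{1}{3}f(z)^3+\tfrac{1}{2}f(z)s(t,z)=\tfrac{1}{2}(\phi(z)\mp\pi i)$ (recorded in the paper as \eqref{eq:theta}) to cancel the exponential against $e^{-\frac{n}{2}\phi(z)\sigma_3}$, and then check that the remaining power and phase factors in $E(z)$, $P(z)$ and $N^{-1}(z)$ collapse to the conjugation in \eqref{eq:Matching-2}. The only minor slip is in your last display: the surviving $\Psi_1$ term is conjugated by the bounded $z$-dependent matrix $G(z)$ of \eqref{eq:G}, not directly by $n^{\pm\nu\sigma_3/3}$, but this does not affect the $O(n^{-1/3})$ conclusion.
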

 \begin{proof}
  According to Remark \ref{rem:PsiSolution},  the solution $\Psi\left(n^{1/3}f(z), n^{2/3}s(t,z)\right)$ exists for $n^{2/3}s(t,z)$  bounded away from the poles of $u(x;\nu)$.  It follows from \eqref{eq:st} that
  \begin{equation}\label{eq:theta}
\frac{1}{3} f(z)^3+\frac{1}{2}f(z)s(t,z) =\frac{1}{2}(\phi(z)\mp\pi i),~ \pm\arg z\in(0,\pi).
 \end{equation}
 This, together with \eqref{PsiInfty} and \eqref{eq:E}, implies that
   \begin{equation}\label{eq:Matching-2}
 P(z)N^{-1}=n^{-\frac{ \nu}{3}\sigma_3}\left(I+\frac{G(z)\Psi_1\left(n^{2/3}s(t,z)\right)
 G(z)^{-1}}{n^{1/3}f(z)}+O\left(1/n^{2/3}\right)\right)n^{\frac{ \nu}{3}\sigma_3},
  \end{equation}
 where
 \begin{equation}\label{eq:G}G(z)=z^{-\frac{ \nu}{2}\sigma_3}\left(\frac{z+1}{f(z)}\right)^{\nu \sigma_3}e^{\frac{n\pi i}{2}\sigma_{3}}e^{\pm\frac{1}{2}\pi i \nu\sigma_{3}}
 \end{equation}
 for $\pm\arg z\in(0,\pi)$. Therefore, we have the matching condition \eqref{eq:Matching}.
 \end{proof}
\subsection{Final transformation $S\to R$ and error estimate}

We define
\begin{equation}\label{eq:R}
  R(z)=\left\{ \begin{array}{ll}
                     n^{\frac{ \nu}{3}\sigma_3} S(z)N(z)^{-1}n^{-\frac{ \nu}{3}\sigma_3}, & \hbox{$|z+1|>\delta$,} \\
                   n^{\frac{ \nu}{3}\sigma_3} S(z)P(z)^{-1}n^{-\frac{ \nu}{3}\sigma_3}, & \hbox{$|z+1|<\delta$.}
                    \end{array}
\right.
 \end{equation}
Using \eqref{eq:Matching} and \eqref{eq:R},  we see that $R(z)$ satisfies the following RH problem.
\begin{rhp} \label{RHP: R}
The function $R(z)$ defined by \eqref{eq:R} satisfies the following properties.
\begin{itemize}
\item[\rm (1)] $R(z)$ is analytic in $\mathbb{C} \setminus\Sigma_R$, where the remaining contour $\Sigma_R$ is illustrated in Figure \ref{hankelcontourR}.
\item[\rm (2)] $R(z)$  satisfies the same  jump condition
\begin{equation}\label{eq:RJump0}
R_{+}(z)= R_{-}(z)J_R(z),
 \end{equation}
with
\begin{equation}\label{eq:RJump1}
 J_R(z)=I+O(1/n^{1/3}), \quad |z+1|=\delta,
 \end{equation}
 and
 \begin{equation}\label{eq:RJump2}
 J_R(z)=I+O(e^{-cn}),
 \end{equation}
 for z on the other jump contours.
   \item[\rm (3)] Near infinity, we have
  \begin{equation}\label{eq:RInfty}
 R(z)=I+O(1/z).
 \end{equation}
\end{itemize}
\end{rhp}
\begin{figure}[h]
  \centering
  \includegraphics[width=10cm,height=5.5cm]{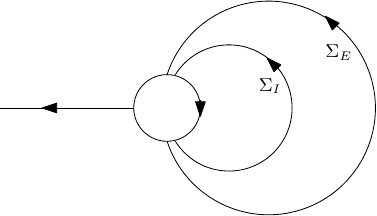}\\
  \caption{ Contours  for the RH problem for $R(z)$}\label{hankelcontourR}
\end{figure}
\begin{pro}\label{Pro:REst}
For $|\tau-1|=O(n^{-2/3})$,  we have the estimate
\begin{equation}\label{eq:REst}
 R(z)=I+\frac{R_1}{n^{1/3}}+ O\left(\frac 1{n^{2/3}}\right),
 \end{equation}
where the error term is uniform for $z$ in the complex plane.
Here
\begin{equation}\label{eq:R1O}
 R_1(z)=\frac{2^{\frac{2}{3}}}{1+z} e^{\frac{ n\pi i}{2}\sigma_3}2^{\frac{2}{3}\nu \sigma_3}\Psi_1(2^{\frac{2}{3}}n^{2/3}(\tau-1))2^{-\frac{2}{3}\nu \sigma_3}e^{-\frac{n\pi i }{2}\sigma_3}
 \end{equation}
for $|1+z|>\delta$.
 \end{pro}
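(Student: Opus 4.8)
The plan is to run the standard small-norm analysis of the RH problem for $R$ and then extract the $n^{-1/3}$-term by a residue computation. By \eqref{eq:RJump1}--\eqref{eq:RJump2} the jump $J_R$ equals $I+O(n^{-1/3})$ on the circle $|z+1|=\delta$ and $I+O(e^{-cn})$ on the remaining parts of $\Sigma_R$, with the estimates holding in $L^2\cap L^\infty(\Sigma_R)$; hence for $n$ large the associated singular integral operator is a contraction, $R$ exists, and
\begin{equation*}
R(z)=I+\frac{1}{2\pi i}\int_{\Sigma_R}\frac{R_-(w)\bigl(J_R(w)-I\bigr)}{w-z}\,dw,\qquad \|R_--I\|_{L^2(\Sigma_R)}=O(n^{-1/3}).
\end{equation*}
Iterating once gives $R_-(J_R-I)=(J_R-I)+O(n^{-2/3})$ in $L^2$, so it remains to expand $J_R-I$ on $|z+1|=\delta$ to one more order and integrate the leading term; the exponentially small pieces over the rest of $\Sigma_R$ are absorbed into $O(n^{-2/3})$ by the usual Cauchy-operator estimates.

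For the refined expansion on the circle I would combine the definition \eqref{eq:R} of $R$ inside and outside $U(-1,\delta)$ with the matching expansion \eqref{eq:Matching-2} established in the proof of Proposition~\ref{Pro:Match}, which yields
\begin{equation*}
J_R(z)-I=-\frac{G(z)\,\Psi_1\!\bigl(n^{2/3}s(t,z)\bigr)\,G(z)^{-1}}{n^{1/3}f(z)}+O\!\left(n^{-2/3}\right),\qquad |z+1|=\delta,
\end{equation*}
with $f$, $s$, $G$ as in \eqref{eq:f}, \eqref{eq:st}, \eqref{eq:G}. The leading coefficient continues meromorphically to $U(-1,\delta)$ with a single simple pole at $z=-1$, produced by $1/f(z)\sim 2^{2/3}/(z+1)$ from \eqref{eq:f}.

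The core of the argument is then the contour integral: substituting the above into the representation for $R$ and deforming, for $|z+1|>\delta$ the only enclosed singularity is the simple pole at $w=-1$, and the residue there gives
\begin{equation*}
R(z)=I+\frac{1}{n^{1/3}}\,\frac{2^{2/3}}{1+z}\,G(-1)\,\Psi_1\!\bigl(2^{2/3}n^{2/3}(\tau-1)\bigr)\,G(-1)^{-1}+O\!\left(n^{-2/3}\right),
\end{equation*}
where I use $s(t,-1)=2^{2/3}(\tau-1)$ from \eqref{eq:st} together with $G(-1)=2^{\frac{2}{3}\nu\sigma_3}e^{\frac{n\pi i}{2}\sigma_3}$, read off from \eqref{eq:G} with $\arg(-1)=\pi$ and $(z+1)/f(z)\to 2^{2/3}$. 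Since the diagonal factors commute this is exactly \eqref{eq:R1O}; for $|z+1|<\delta$ the same integral produces the holomorphic remainder inside the disc, which is bounded and continues the formula across the circle, so the estimate \eqref{eq:REst} holds uniformly on $\mathbb{C}$.

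The step I expect to require the most care is the uniformity of the $O(n^{-2/3})$ remainder in $z$ over the whole plane, in particular for $z$ near $-1$ and near the portion of $\Sigma_R$ on the negative axis: this rests on the standard $L^2$--$L^\infty$ estimates for the Cauchy operator, combined with the observation that under the hypothesis $|\tau-1|=O(n^{-2/3})$ one has $n^{2/3}s(t,w)=2^{2/3}n^{2/3}(\tau-1)\bigl(1+O(w+1)\bigr)$ confined to a fixed compact set on the circle, which — being bounded away from the poles of $u(\cdot;\nu)$ by hypothesis (automatic for $\nu<0$ or $\nu\in\mathbb{N}$ by Remark~\ref{rem:poles}) — keeps $\Psi_1$ and the next coefficient in \eqref{eq:Matching-2} uniformly bounded, so that the $O$-constants are independent of $n$ and $z$. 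Tracking the branch of $z^{\pm\nu/2}$ at $z=-1$, the $e^{n\pi i/2}$ factors and the powers of $2$ carefully is what pins down the exact prefactor in \eqref{eq:R1O}.
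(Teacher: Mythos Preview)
Your proof is correct and follows essentially the same route as the paper: expand the jump $J_R$ on $|z+1|=\delta$ via the matching expansion \eqref{eq:Matching-2}, then obtain $R_1$ by computing the residue at $w=-1$ of the leading term in the Cauchy integral. One small point to watch: your minus sign in $J_R(z)-I$ corresponds to orienting $|z+1|=\delta$ counterclockwise (so that the minus side is outside and $J_R=NP^{-1}$ up to conjugation), whereas the paper uses the clockwise orientation with $J_R-I=+J_1/n^{1/3}+O(n^{-2/3})$; both conventions give the same $R_1$, but you should state the orientation explicitly so the sign is unambiguous.
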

 \begin{proof}
 It is seen from  \eqref{eq:Matching-2} and \eqref{eq:R} that
 \begin{equation}\label{eq:JR}
 J_{R}(z)=I+\frac{J_1(z)}{n^{1/3}}+O\left(\frac 1{n^{2/3}}\right),
 \end{equation}
for $|1+z|=\delta$.
Here
 \begin{equation}\label{eq:J1}
 J_{1}(z)=G(z)\frac{\Psi_1(n^{2/3}s(t,z))}{f(z)}G(z)^{-1}, 
 \end{equation}
 with $G(z)$ given in \eqref{eq:G}. Accordingly we have the expansion \eqref{eq:REst}.
 Substituting  \eqref{eq:REst} and \eqref{eq:JR} into \eqref{eq:RJump0}, we obtain
\begin{equation}\label{eq:R1relation}
 R_{1+}(z)= R_{1-}(z)+J_1(z) .\end{equation}
 This, together with the behavior $ R_{1}(z)=O(1/z)$, implies
 \begin{equation}\label{eq:R1}
 R_1(z)=\frac{1}{2\pi i} \oint_{|x+1|=\delta}\frac{J_1(x)}{x-z}dx,
 \end{equation}
 where the integral contour is oriented clockwise.
 Therefore, we have \eqref{eq:R1O} by \eqref{eq:Matching-2}, \eqref{eq:J1} and \eqref{eq:R1}.
 This completes the proof of the proposition.
  \end{proof}
\section{Asymptotics for $\Psi$ as $x\rightarrow+\infty$ }\label{section:model}
 We begin with the following re-scaling of variable
 \begin{equation}\label{rescaling+}
 \Phi(z)=x^{-\frac{\nu}{2}\sigma_3}\Psi(x^{\frac{1}{2}}z;x) \left\{\begin{array}{ll} I, & \arg z\in(-\frac{1}{2}\pi,\pi),\\
 e^{2\pi  i \nu\sigma_{3}} , & \arg z\in(-\pi,-\frac{1}{2}\pi).
  \end{array}
\right.
 \end{equation}
 It is seen from the RH problem \ref{RHP: Psi} that  $\Phi(z)$ satisfies the following RH problem.
\begin{rhp}The function $\Phi(z)$ defined by \eqref{rescaling+} satisfies the following properties.
\begin{itemize}
\item[\rm (1)] $ \Phi(z)$ is analytic in $\mathbb{C} \setminus \cup_{k=1}^6 \pi_{k}$, where the contours $\cup_{k=1}^6 \pi_{k}$ are shown in Figure \ref{contourPhi}.
\item[\rm (2)] $ \Phi(z)$  satisfies $ \Phi_{+}(z)= \Phi_{-}(z)J_{\Phi}(z)$, for $z\in  \cup_{k=1}^6 \pi_{k} $,  where
\begin{equation}\label{JumptildePsi}
J_{\Phi}(z)=
  \left\{ \begin{array}{lll}
  \begin{pmatrix}1 &  e^{-\pi i \nu} \\0 &1\end{pmatrix}  ,   &   z\in\pi_{1},\\[.5cm]
 \begin{pmatrix}1 & 0 \\-e^{\pi i \nu} & 1\end{pmatrix} , &  z\in\pi_{2},\\[.5cm]
 \begin{pmatrix}1 & e^{-\pi i \nu}-e^{-3\pi i \nu}\\ 0 & 1 \end{pmatrix},&  z\in\pi_{3},\\[.5cm]
     \begin{pmatrix}1 & 0 \\e^{3\pi i \nu} & 1\end{pmatrix} , &  z\in\pi_{4},\\[.5cm]
     \begin{pmatrix}1 &  -e^{\pi i \nu} \\0 &1\end{pmatrix}  ,   &    z\in\pi_{5},\\[.5cm]
     \begin{pmatrix} e^{-2\pi i \nu} & 0 \\ 0 &e^{2\pi i \nu}\end{pmatrix}  ,   &    z\in\pi_{6}.\\[.5cm]
  \end{array}
\right.
  \end{equation}
  \item[\rm (3)] As $z\to \infty$, we have
 \begin{equation}\label{rescaling+:infty}
 \Phi(z)=\left(I+\frac{\Phi_{1}}{z}+O\left(\frac{1}{z^2}\right)\right)e^{t\theta(z)\sigma_3}z^{\nu\sigma_3},
\end{equation}
where $t=x^{\frac{3}{2}}, \theta(z)=\frac{1}{3}z^3+\frac{1}{2}z$ and  $ \arg z\in(-\frac{1}{2}\pi,\frac{3}{2}\pi)$.
\end{itemize}
\end{rhp}
 \begin{figure}[h]
  \centering
  \includegraphics[width=5cm,height=5.6cm]{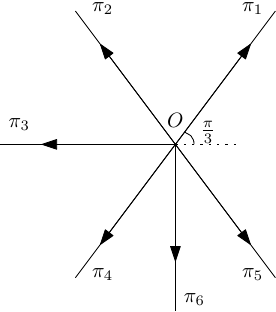}\\
  \caption{Contours and regions of the RH problem for $\Phi$ } \label{contourPhi}
\end{figure}

To normalize the asymptotic behavior of $\Phi(z)$ at infinity, we introduce the $g$-function
\begin{equation}\label{g-function}
g(z)=\frac{1}{3}(z^2+1)^{\frac{3}{2}},
 \end{equation}
where the branches are taken such that $\arg(z\pm i)\in(-\frac{\pi}{2},\frac{3\pi}{2})$. Therefore, we have
$$g_{+}(z)+g_{-}(z)=0, \quad z\in(-i,i).$$
A straightforward computation gives
\begin{equation}\label{eq:g}
g(z)=\frac{1}{3}z^3+\frac{1}{2}z+\frac{1}{8}z^{-1}+O(z^{-3}),\quad z\rightarrow\infty.
\end{equation}
\begin{figure}[h]
  \centering
  \includegraphics[width=6cm,height=7cm]{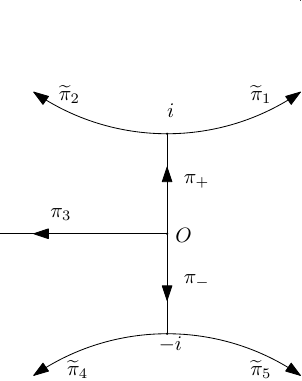}\\
  \caption{The jump curves $\Sigma_{{U}}$ of the RH problem for ${U}(z)$}\label{deform1}
\end{figure}

The second transformation is now defined as
\begin{equation}\label{eq:U}
U(z)=2^{\nu\sigma_{3}}\Phi(z)e^{-tg(z)\sigma_{3}}\varphi(z)^{-\nu\sigma_{3}},
\end{equation}
 where $\varphi(z)=z+\sqrt{z^2+1}$  is a conformal mapping from $\mathbb{C}\setminus [-i, i]$ onto  $|\varphi(z)|>1$, and the branches are taken such that $\arg(z\pm i)\in(-\frac{\pi}{2},\frac{3\pi}{2})$ and $\arg\varphi(z)\in(-\frac{\pi}{2},\frac{3\pi}{2})$. 
 This implies that
\begin{equation}\label{varphijump}
\left\{\begin{aligned}
&\left(\varphi(z)^{\nu}\right)_+\left(\varphi(z)^{\nu}\right)_-=e^{\pi i\nu}, \quad z\in(-i,i),\\
&\left(\varphi(z)^{\nu}\right)_+=\left(\varphi(z)^{\nu}\right)_-e^{2\pi i\nu}, \quad z\in(-i\infty,-i).
\end{aligned}\right.
\end{equation}

\begin{rhp}
$U(z)$ satisfies the following RH problem.
\begin{itemize}
\item[\rm (1)] $U(z)$ is analytic in $\mathbb{C} \setminus \Sigma_{U}$, where $\Sigma_{{U}}$ is shown in Figure \ref{deform1}, and where we have deformed
$\pi_{k}, k=1,2,4,5$ into the steepest descent curves $\widetilde{\pi}_{k}, k=1,2,4,5$ of the $g$-function $g(z)$ .
\item[\rm (2)] ${U}(z)$  satisfies ${U}_{+}(z)={U}_{-}(z)J_{{U}}(z)$, where
  \begin{equation}\label{eq:Ujump}
 J_{{U}}(z)=\left\{
\begin{aligned}
&\begin{pmatrix}1&\varphi(z)^{2\nu}e^{2tg(z)-\pi i\nu}\\0&1 \end{pmatrix},\quad &z&\in\widetilde{\pi}_{1},\\
&\begin{pmatrix}1&0\\-\varphi(z)^{-2\nu}e^{-2tg(z)+\pi i\nu}&1 \end{pmatrix},\quad &z&\in\widetilde{\pi}_{2},\\
&\begin{pmatrix}1& (e^{-\pi i \nu}-e^{-3\pi i \nu})\varphi(z)^{2\nu}e^{2tg(z)}\\ 0 & 1\end{pmatrix},\quad &z&\in\pi_{3}.\\
&\begin{pmatrix}1&0\\ \varphi(z)^{-2\nu}e^{-2tg(z)+3\pi i\nu}&1 \end{pmatrix},\quad &z&\in\widetilde{\pi}_{4},\\
&\begin{pmatrix}1&-\varphi(z)^{2\nu}e^{2tg(z)+\pi i\nu}\\0&1 \end{pmatrix},\quad &z&\in\widetilde{\pi}_{5},\\
&\begin{pmatrix}0 & 1\\ -1 & e^{-\pi i\nu}\varphi(z)^{2\nu}_{+}e^{2tg_{+}(z)} \end{pmatrix},\quad &z&\in\pi_{+},\\
&\begin{pmatrix}e^{-3\pi i \nu}\varphi(z)^{2\nu}_{+}e^{2tg_{+}(z)} & -1\\ 1 & 0 \end{pmatrix},\quad &z&\in\pi_{-}.
\end{aligned}
\right.
  \end{equation}
  \item[\rm (3)] As $z\to \infty$, we have
  \begin{equation}\label{eq:UInfty}
  U(z)=I+O\left(\frac 1 {z}\right).
 \end{equation}
\end{itemize}
\end{rhp}
According to the signature of $\Re g(z)$, it is readily seen that the jump matrices on $\widetilde{\pi}_{k}, k=1,2,4,5$ and $\pi_{3}$ tend to the identity
matrix exponentially fast as $t\rightarrow+\infty$. The next task is to construct a global parametrix with constant jump on the segment $[-i,i]$ and two local parametrices near the critical points $z=\pm i$.
\subsection{Global parametrix on $[-i,i]$}
\begin{rhp} \label{RHP: Ptildeinfty}
We look for a $2 \times 2$ matrix-valued function $\widetilde{P}^{(\infty)}(z)$ satisfying the following properties.
\begin{itemize}
\item[\rm (1)] $\widetilde{P}^{(\infty)}(z)$ is analytic in $\mathbb{C} \setminus [-i,i]$.
\item[\rm (2)] $\widetilde{P}^{(\infty)}(z)$  satisfies the jump condition
  \begin{equation}\label{eq:Ptildejump}
  \widetilde{P}^{(\infty)}_{+}(z)=\widetilde{P}^{(\infty)}_{-}(z)
 \begin{pmatrix}
  0& 1 \\
  -1 & 0
  \end{pmatrix} ,\qquad z\in [-i,i].
  \end{equation}
  \item[\rm (3)] As $z\to \infty$, we have
  \begin{equation}\label{eq:PtildeInfty}
  \widetilde{P}^{(\infty)}(z)=I+O\left (\frac 1 {z}\right ).
 \end{equation}
\end{itemize}
\end{rhp}
A solution of the above RH problem is given by
\begin{equation}\label{Solu:PtildeInfty}
\widetilde{P}^{(\infty)}(z)=
\begin{pmatrix}1& 1 \\i& -i\end{pmatrix}w(z)^{\sigma_3}\begin{pmatrix}1& 1 \\i& -i\end{pmatrix}^{-1},\quad w(z)=\left(\frac{z-i}{z+i}\right)^{\frac{1}{4}},
\end{equation}
where the branches are taken such that $\arg(z\pm i)\in(-\frac{\pi}{2},\frac{3\pi}{2})$.
\subsubsection{Local parametrices near $\pm i$}
In this subsection, we seek two  parametrices $P^{(\pm i)}(z)$ satisfying the same jumps on the jump contours as $U(z)$ respectively in the neighborhoods $U(\pm i,\delta)$ of the saddle points $\pm i$, and matching with $\widetilde{P}^{(\infty)}(z)$ on the boundaries $\partial U(\pm i,\delta)$.
\begin{rhp} \label{RHP: Pilocal}
We look for a $2 \times 2$ matrix-valued function $P^{(i)}(z)$ satisfying the following properties.
\begin{itemize}
\item[\rm (1)] $P^{( i)}(z)$ is analytic in $U(i,\delta)\setminus \Sigma_{U}$.
\item[\rm (2)] $P^{( i)}(z)$  satisfies the same jump condition as $U(z)$ on $U(i,\delta)\cap\Sigma_{U}$.
  \item[\rm (3)] On $\partial U( i,\delta)$, we have
  \begin{equation}\label{eq:Pmatch}
  P^{(i)}(z)=\left(I+O(t^{-1})\right)\widetilde{P}^{(\infty)}(z),\quad \mathrm{as}\quad t\to \infty.
 \end{equation}
\end{itemize}
\end{rhp}
To construct a solution to the above RH problem, we first define a conformal mapping
\begin{equation}\label{lambda}
\lambda(z)=\left(\frac{3}{2}g(z)\right)^{\frac{2}{3}}=
e^{\frac{\pi}{2}i}2^{\frac{1}{3}}(z-i)(1+o(1)),\quad z\rightarrow i.
\end{equation}
Then, the solution to the above RH problem can be built out of the Airy function as follows:
\begin{equation}\label{Pi}
P^{(i)}(z)=E^{(i)}(z)\Phi^{\mathrm{(Ai)}}\left(t^{\frac{2}{3}}\lambda(z)\right)e^{\frac{1}{2}\pi i \nu\sigma_{3}}\sigma_{3}\varphi(z)^{-\nu\sigma_3}e^{-tg(z)\sigma_{3}},
\end{equation}
where $\Phi^{\mathrm{(Ai)}}$ denotes the Airy parametrix given in  Appendix \ref{AP}  and $E^{(i)}(z)$ is given by
\begin{equation}\label{Ei}
E^{(i)}(z)=\widetilde{P}^{(\infty)}(z)\varphi(z)^{\nu\sigma_{3}}\sigma_{3}e^{-\frac{1}{2}\pi i \nu\sigma_{3}}\begin{pmatrix}-i &i\\1&1\end{pmatrix}\left(t^{\frac{2}{3}}\lambda(z)\right)^{\frac{\sigma_{3}}{4}}.
\end{equation}
Here, the branch of the function $\lambda(z)^{\frac{1}{4}}$ is chosen such that $\arg\lambda(z)\in(0,2\pi)$. Particularly, we have on the segment $[-i,i]$
\begin{equation}\label{lambdajump}
\left(\lambda(z)^{\frac{\sigma_3}{4}}\right)_+=
\left(\lambda(z)^{\frac{\sigma_3}{4}}\right)_-e^{\frac{\pi i}{2}\sigma_3}.
\end{equation}
From \eqref{eq:Ptildejump} and \eqref{lambdajump}, it is straightforward to check that $E^{(i)}(z)$ is analytic in the neighborhood $U( i,\delta)$.
Furthermore, using \eqref{Solu:PtildeInfty}, \eqref{Pi} and \eqref{AiryAsyatinfty}, we arrive at the following  matching condition
\begin{equation}\label{JR:i}
\begin{aligned}
&P^{(i)}(z)\widetilde{P}^{(\infty)}(z)^{-1}\\
&=W^{(i)}(z)\left(I+\frac{1}{48\lambda(z)^{\frac{3}{2}}t}\begin{pmatrix}-1&6i\\6i&1\end{pmatrix}+O(t^{-2})\right)W^{(i)}(z)^{-1},\\
&=I+Q^{(i)}(z)t^{-1}+O\left(t^{-2}\right),
\end{aligned}
\end{equation}
where
\begin{equation}
W^{(i)}(z)=\widetilde{P}^{(\infty)}(z)\varphi(z)^{\nu\sigma_{3}}\sigma_{3}e^{-\frac{1}{2}\pi i\nu\sigma_{3}},
\end{equation}
and
\begin{equation}
Q^{(i)}(z)=\frac{1}{96\lambda(z)^{\frac{3}{2}}}\begin{pmatrix}c_{1}(z) &*\\ * & *\end{pmatrix},
\end{equation}
with
\begin{equation*}
c_{1}(z)=\left(-1-3\varphi(z)^{-2\nu}e^{\pi i\nu}-3\varphi(z)^{2\nu}e^{-\pi i\nu}\right)w(z)^{2}+\left(-1+3\varphi(z)^{-2\nu}e^{\pi i\nu}+3\varphi(z)^{2\nu}e^{-\pi i\nu}\right)w(z)^{-2} .\\
\end{equation*}
\begin{rhp} \label{RHP: P-ilocal}
We look for a $2 \times 2$ matrix-valued function $P^{(-i)}(z)$ satisfying the following properties.
\begin{itemize}
\item[\rm (1)] $P^{(-i)}(z)$ is analytic in $U(-i,\delta)\setminus \Sigma_{U}$.
\item[\rm (2)] $P^{(-i)}(z)$  satisfies the same jump condition as $U(z)$ on $U(-i,\delta)\cap\Sigma_{U}$.
  \item[\rm (3)] On $\partial U(-i,\delta)$, we have
  \begin{equation}\label{eq:P-match}
  P^{(-i)}(z)=\left(I+O(t^{-1})\right)\widetilde{P}^{(\infty)}(z),\quad \mathrm{as}\quad t\to \infty.
 \end{equation}
\end{itemize}
\end{rhp}
Similarly, the solution to the above RH problem can be built out of the Airy function as follows:
\begin{equation}\label{P-i}
P^{(-i)}(z)=\left\{\begin{aligned}
&E^{(-i)}(z)\Phi^{\mathrm{(Ai)}}\left(t^{\frac{2}{3}}\lambda(-z)\right)\sigma_{2}e^{\frac{3}{2}\pi i \nu\sigma_{3}}\varphi(z)^{-\nu\sigma_3}e^{-tg(z)\sigma_{3}},&  \Re z<0,\\
&E^{(-i)}(z)\Phi^{\mathrm{(Ai)}}\left(t^{\frac{2}{3}}\lambda(-z)\right)\sigma_{2}e^{-\frac{1}{2}\pi i \nu\sigma_{3}}\varphi(z)^{-\nu\sigma_3}e^{-tg(z)\sigma_{3}},&  \Re z>0,\\
\end{aligned}
\right.
\end{equation}
where $\Phi^{\mathrm{(Ai)}}$ is the Airy parametrix given in  Appendix \ref{AP}, $\lambda(z)$ is defined in \eqref{lambdajump} and $E^{(-i)}(z)$ is given by
\begin{equation}\label{E-i}
E^{(-i)}(z)=\left\{\begin{aligned}
&\widetilde{P}^{(\infty)}(z)\varphi(z)^{\nu\sigma_{3}}e^{-\frac{3}{2}\pi i \nu\sigma_{3}}\sigma_{2}\begin{pmatrix}-i &i\\1&1\end{pmatrix}
\left(t^{\frac{2}{3}}\lambda(-z)\right)^{\frac{\sigma_{3}}{4}},& \Re z<0,\\
&\widetilde{P}^{(\infty)}(z)\varphi(z)^{\nu\sigma_{3}}e^{\frac{1}{2}\pi i \nu\sigma_{3}}\sigma_{2}\begin{pmatrix}-i &i\\1&1\end{pmatrix}
\left(t^{\frac{2}{3}}\lambda(-z)\right)^{\frac{\sigma_{3}}{4}},& \Re z>0.\\
\end{aligned}
\right.
\end{equation}
It is readily seen that $E^{(-i)}(z)$ is analytic in $U(-i,\delta)$ and the function $P^{(-i)}(z)$ in \eqref{P-i}
satisfies the matching condition \eqref{eq:P-match}. It follows from \eqref{Solu:PtildeInfty}, \eqref{P-i} and \eqref{AiryAsyatinfty} that
\begin{equation}\label{JR:-i}
\begin{aligned}
&P^{(-i)}(z)\widetilde{P}^{(\infty)}(z)^{-1}\\
&=W^{(-i)}(z)\left(I+\frac{1}{48\lambda(-z)^{\frac{3}{2}}t}
\begin{pmatrix}-1&6i\\6i&1\end{pmatrix}+O\left(t^{-2}\right)\right)W^{(-i)}(z)^{-1},\\
&=I+Q^{(-i)}(z)t^{-1}+O\left(t^{-2}\right),
\end{aligned}
\end{equation}
where
\begin{equation}
W^{(-i)}(z)=\left\{\begin{aligned}
&\widetilde{P}^{(\infty)}(z)\varphi(z)^{\nu\sigma_{3}}e^{-\frac{3}{2}\pi i \nu\sigma_{3}}\sigma_{2},\quad \Re z<0,\\
&\widetilde{P}^{(\infty)}(z)\varphi(z)^{\nu\sigma_{3}}e^{\frac{1}{2}\pi i \nu\sigma_{3}}\sigma_{2},\quad \Re z>0,\\
\end{aligned}
\right.
\end{equation}
and
\begin{equation}
Q^{(-i)}(z)=\frac{1}{96\lambda(-z)^{\frac{3}{2}}}\begin{pmatrix}c_{2}(z) &*\\ * & *\end{pmatrix},
\end{equation}
with
\begin{equation*}
\begin{aligned}
c_{2}(z)=(1-3\varphi(z)^{-2\nu}e^{-\pi i\nu}-3\varphi(z)^{2\nu}e^{\pi i\nu})w(z)^{2}+(1+3\varphi(z)^{-2\nu}e^{-\pi i\nu}+3\varphi(z)^{2\nu}e^{\pi i\nu})w(z)^{-2}.\\
\end{aligned}
\end{equation*}

\subsection{Final transformation}
The final transformation is defined as
\begin{equation}\label{deformR}
\widetilde{R}(z)=\left\{\begin{aligned}
&U(z)P^{( \pm i)}(z)^{-1},\quad &z&\in U(\pm i,\delta)\setminus\Sigma_{U},\\
&U(z)\widetilde{P}^{(\infty)}(z)^{-1},&\mathrm{e}&\mathrm{lsewhere}.\\ \end{aligned}
\right.
\end{equation}

\begin{figure}[h]
  \centering
  \includegraphics[width=6.5cm,height=6.7cm]{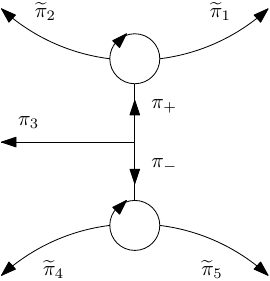}\\
  \caption{The jump contours $\Sigma_{\widetilde{R}}$ of the RH problem for $\widetilde{R}(z)$}\label{sigmaR}
\end{figure}
\begin{rhp} \label{RHP: finalR}
$\widetilde{R}(z)$ satisfies the following properties.
\item{(1)} $\widetilde{R}(z)$ is analytic for $z\in \mathbb{C}\setminus\Sigma_{\widetilde{R}}$, where the contour $\Sigma_{\widetilde{R}}$ is illustrated in Figure \ref{sigmaR}.
\item{(2)} For  $z\in \Sigma_{\widetilde{R}}$, we have $\widetilde{R}_+(z)=\widetilde{R}_-(z)J_{\widetilde{R}}(z)$, where
 \begin{equation}\label{JumpRtilde}
 J_{\widetilde{R}}(z)=\left\{\begin{aligned}
&P^{(\pm i)}(z)\widetilde{P}^{(\infty)}(z)^{-1},\quad &z&\in \partial U(\pm i,\delta),\\
&\widetilde{P}^{(\infty)}(z)J_{U}(z)\widetilde{P}^{(\infty)}(z)^{-1},\quad &\mathrm{e}&\mathrm{lsewhere}.
\end{aligned}
\right.
 \end{equation}
\item{(3)} As $z\rightarrow\infty$, we have
\begin{equation}\label{Rtilde:infty}
\widetilde{R}(z)=I+\frac{\widetilde{R}_1}{z}+O\left(\frac 1 {z^{2}}\right).
\end{equation}
\end{rhp}
Based on the matching conditions \eqref{eq:Pmatch} and \eqref{eq:P-match}, we have the following estimates as $t\rightarrow\infty$
\begin{equation}\label{JRtildeestimation}
J_{\widetilde{R}}(z)=\left\{\begin{aligned}
&I+O\left(t^{-1}\right), &z&\in \partial U(\pm i,\delta),\\
&I+O\left(e^{-c_{1}t}\right),&\mathrm{e}&\mathrm{lsewhere},
\end{aligned}\right.
\end{equation}
where $c_{1}$ is a positive constant. Consequently, we have the asymptotic approximation
\begin{equation}\label{Rtildeestimation}
\widetilde{R}(z)=I+O\left(t^{-1}\right),\quad \mathrm{as}\quad t\rightarrow \infty,
\end{equation}
where the error term is uniform for $z\in \mathbb{C}\setminus\Sigma_{\widetilde{R}}$.

\subsection{Asymptotics for $H$ and $u$ as $x\rightarrow+\infty$ }
According to our previous RH analysis $\widetilde{\Psi}\mapsto\Phi\mapsto U\mapsto \widetilde{R}$, for large $z$, we have
\begin{equation}\label{RPtilde}
2^{\nu\sigma_{3}}\Phi(z)e^{-tg(z)\sigma_{3}}\varphi(z)^{-\nu\sigma_{3}}=\widetilde{R}(z)\widetilde{P}^{(\infty)}(z).
\end{equation}
It follows from the large  $z$ asymptotics \eqref{rescaling+:infty} and \eqref{eq:g}  that
\begin{equation}\label{Phiasymp}
2^{\nu\sigma_{3}}\Phi(z)e^{-tg(z)\sigma_{3}}\varphi(z)^{-\nu\sigma_{3}}
=I+\frac{2^{\nu\sigma_{3}}\Phi_{1}2^{-\nu\sigma_{3}}-\frac{1}{8}t\sigma_{3}}{z}+
O\left(\frac{1}{z^2}\right).
\end{equation}
The asymptotic approximation of $\widetilde{P}^{(\infty)}(z)$ is  obtained by expanding  \eqref{Solu:PtildeInfty},
\begin{equation}\label{Ptilde:infty}
\widetilde{P}^{(\infty)}(z)=I+\frac{\widetilde{P}^{(\infty)}_1}{z}+
O\left(\frac{1}{z^2}\right),\quad \mathrm{as}~z\rightarrow\infty,
\end{equation}
where
\begin{equation}\label{Ptilde:infty1}
\widetilde{P}^{(\infty)}_1=\begin{pmatrix}
0 & -\frac{1}{2}\\
\frac{1}{2} & 0
\end{pmatrix}.
\end{equation}
From \eqref{Phiasymp} and the asymptotic behaviors \eqref{Rtilde:infty} and \eqref{Ptilde:infty}, we have
\begin{equation}\label{Phi1}
\Phi_{1}=2^{-\nu\sigma_{3}}\left(\widetilde{P}^{(\infty)}_1+\widetilde{R}_1\right)2^{\nu\sigma_{3}}+\frac{1}{8}t\sigma_{3},
\end{equation}
which implies
\begin{equation}\label{R12P12tilde}
(\Phi_{1})_{12}=4^{-\nu}\left((\widetilde{P}^{(\infty)}_{1})_{12}+(\widetilde{R}_{1})_{12}\right),
\end{equation}
and
\begin{equation}\label{R11P11tilde}
(\Phi_{1})_{11}=(\widetilde{P}^{(\infty)}_{1})_{11}+(\widetilde{R}_{1})_{11}+\frac{1}{8}t.
\end{equation}
In virtue of \eqref{JR:i}, \eqref{JR:-i} and \eqref{JumpRtilde}, we have as $ t\rightarrow\infty$
\begin{equation}\label{R1tilde:asym}
\widetilde{R}(z)=I+\frac{\widetilde{R}^{(1)}(z)}{t}+O\left(\frac 1 {t^{2}}\right),\quad z\in\partial U(\pm i,\delta),\end{equation}
where the coefficient $\widetilde{R}^{(1)}(z)=O(z^{-1})$ as $z\rightarrow\infty$, and
$\widetilde{R}^{(1)}(z)$  
satisfies the jump relation
\begin{equation}
\widetilde{R}^{(1)}_{+}(z)-\widetilde{R}^{(1)}_{-}(z)=Q^{(\pm i)}(z),\quad z\in\partial U(\pm i,\delta).
\end{equation}
The solution to the above RH problem is explicitly given by
\begin{equation}
\widetilde{R}^{(1)}(z)=\left\{\begin{aligned}
&\frac{C_{+}}{z-i}+\frac{C_{-}}{z+i}+\frac{D_{+}}{(z-i)^2}+\frac{D_{-}}{(z+i)^2}, &z \notin \overline{U(i,\delta)}\cup\overline{U(-i,\delta)},\\
&\frac{C_{+}}{z-i}+\frac{C_{-}}{z+i}+\frac{D_{+}}{(z-i)^2}+\frac{D_{-}}{(z+i)^2}-Q^{(\pm i)}(z), &z \in U(i,\delta)\cup U(-i,\delta),
\end{aligned}\right.
\end{equation}
where $C_{\pm}=\Res(Q^{(\pm i)}(z),z=\pm i)$.
Expanding this as $z\rightarrow\infty$, we obtain
\begin{equation}\label{R11tilde}
\widetilde{R}_{1}=\frac{\widetilde{R}^{(1)}_{1}}{t}+O\left(\frac 1 {t^{2}}\right), \quad t\rightarrow\infty,
\end{equation}
where $\widetilde{R}^{(1)}_{1}=C_{+}+C_{-}$,
after a direct computation by expanding $\varphi(z)$, $w(z)$ and $\lambda(\pm z)^{\frac{3}{2}}$ as $z\rightarrow\pm i$   , it follows that
\begin{equation}
(\widetilde{R}^{(1)}_{1})_{11}=(C_{+}+C_{-})_{11}=-\frac{1}{2}\nu^2+\frac{1}{8}.
\end{equation}
Substituting  \eqref{Ptilde:infty1}  and  \eqref{R11tilde}  into \eqref{R12P12tilde} and \eqref{R11P11tilde},  and
using the formula \eqref{eq:psi1} and \eqref{rescaling+}, we obtain \eqref{eq:uasyinfty} and \eqref{eq:Hasyinfty}.
\section{Asymptotics for $\Psi$ as $x\rightarrow-\infty$ }\label{sec:Psiinfty}
\subsection{Asymptotics for $\Psi$ as $x\rightarrow-\infty$ with $\nu \in\mathbb{C}\setminus \{0\}$} 
 First, we  introduce the following re-scaling of variable
 \begin{equation}\label{rescaling-}
 \widehat{\Phi}(z)=(-x)^{-\frac{\nu}{2}\sigma_3}\Psi\left((-x)^{\frac{1}{2}}z;x\right).
 \end{equation}
\begin{rhp}
$\widehat{\Phi}(z)$ solves the following RH problem.
\begin{itemize}
\item[\rm (1)] $ \widehat{\Phi}(z)$ is analytic in $\mathbb{C} \setminus\cup_{k=1}^5 ~\Gamma_k$; see Figure \ref{ModelRHP}.
\item[\rm (2)] $ \widehat{\Phi}(z)$  satisfies $ \widehat{\Phi}_{+}(z)= \widehat{\Phi}_{-}(z)J_\Psi$, $z\in\cup_{k=1}^5 ~\Gamma_k$ with $J_\Psi$ given in\eqref{eq:Sk}.
  \item[\rm (3)] As $z\to \infty$, we have
 \begin{equation}\label{hatPhi:infty}
\widehat{\Phi}(z)=\left(I+\frac{\widehat{\Phi}_{1}}{z}+O\left(\frac{1}{z^2}\right)\right)e^{t\widehat{\theta}(z)\sigma_3}z^{\nu\sigma_3},
\end{equation}
where $t=|x|^{\frac{3}{2}}$,  $\widehat{\theta}(z)=\frac{1}{3}z^3-\frac{1}{2}z$, and $\arg z \in(-\pi, \pi)$.
\end{itemize}
\end{rhp}
We introduce the $g$-function
\begin{equation}\label{hatg-function}
\widehat{g}(z)=\widehat{\theta}(z)-\widehat{\theta}\left(- {\sqrt{2}}/{2}\right).
 \end{equation}
It is direct to see that $\widehat{g}(z)$ has two saddle points at $z_{\pm}=\pm \frac{\sqrt{2}}{2}$.
The second transformation is defined by
\begin{equation}\label{eq:Uhat}
\widehat{U}(z)=e^{-t\widehat{\theta}\left(-\frac{\sqrt{2}}{2}\right)\sigma_{3}}
\widehat{\Phi}(z)e^{-t\widehat{g}(z)\sigma_{3}}.
\end{equation}
We also deform the jump contours $\Gamma_{k}$ into the steepest descent curves $\widehat{\Gamma}_{k}$, $k=1,2,4,5$  of the function $\widehat{g}(z)$. Then, we have
\begin{equation}\label{eq:hatgSign}
\Re \widehat{g}(z)<0, \quad z\in \widehat{\Gamma}_{1}\cup \widehat{\Gamma}_{3}\cup  \widehat{\Gamma}_{5}\cup  \widehat{\Gamma}_{6}, ~~\mbox{and}~~ \Re \widehat{g}(z)>0, \quad z\in \widehat{\Gamma}_{2}\cup \widehat{\Gamma}_{4},
\end{equation}
where the curves $\widehat{\Gamma}_{k}$, $k=1,2,3,4,5,6$ are  illustrated in   Figure \ref{NEWdeformation1}.
\begin{rhp}
$\widehat{U}(z)$ satisfies the following RH problem.
\begin{itemize}
\item[\rm (1)] $\widehat{U}(z)$ is analytic in $\mathbb{C} \setminus \Sigma_{\widehat{U}}$, where  $\Sigma_{\widehat{U}}$ is indicated in  Figure \ref{NEWdeformation1}.
\item[\rm (2)] $\widehat{U}(z)$  satisfies $\widehat{U}_{+}(z)=\widehat{U}_{-}(z)J_{\widehat{U}}(z)$, where
  \begin{equation}\label{eq:Uhatjump}
 J_{\widehat{U}}(z)=\left\{
\begin{aligned}
&\begin{pmatrix}1 & e^{2t\widehat{g}(z)-\pi i \nu}\\ 0 & 1 \end{pmatrix},\quad &z&\in\widehat{\Gamma}_{1},\\
&\begin{pmatrix}1 & 0\\  -e^{-2t\widehat{g}(z)+\pi i \nu} & 1 \end{pmatrix},\quad &z&\in\widehat{\Gamma}_{2},\\
&\begin{pmatrix}e^{-2\pi i \nu}& (e^{\pi i \nu}-e^{-\pi i \nu})e^{2t\widehat{g}(z)}\\0& e^{2\pi i \nu}\end{pmatrix},\quad &z&\in\widehat{\Gamma}_{3},\\
&\begin{pmatrix}1 & 0\\ e^{-2t\widehat{g}(z)-\pi i \nu} & 1 \end{pmatrix},\quad &z&\in\widehat{\Gamma}_{4},\\
&\begin{pmatrix}1 & -e^{2t\widehat{g}(z)+\pi i \nu}\\ 0 & 1 \end{pmatrix},\quad &z&\in\widehat{\Gamma}_{5},\\
&\begin{pmatrix}1& (e^{-\pi i \nu}-e^{\pi i \nu})e^{2t\widehat{g}(z)}\\ 0 & 1\end{pmatrix},\quad &z&\in\widehat{\Gamma}_{6}.
\end{aligned}
\right.
  \end{equation}
  \item[\rm (3)] As $z\to \infty$, we have
  \begin{equation}\label{eq:UhatInfty}
  \widehat{U}(z)=\left(I+\frac{\widehat{U}_1}{z}+O\left(\frac 1 {z^2}\right)\right)z^{\nu\sigma_3},
 \end{equation}
 where  $\arg z \in(-\pi, \pi)$.
\end{itemize}
\end{rhp}

\begin{figure}[h]
  \centering
  \includegraphics[width=8.7cm,height=5.6cm]{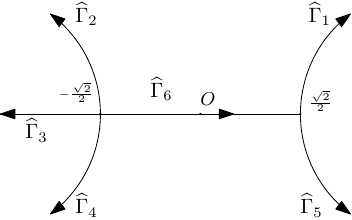}\\
  \caption{ The jump contours $\Sigma_{\widehat{U}}$ of RH problem for $\widehat{U}(z)$}\label{NEWdeformation1}
\end{figure}

It is seen from \eqref{eq:hatgSign} that the jump matrices on $\widehat{\Gamma}_{k},  k=1,2,4,5,6$  tend to the identity
matrix exponentially fast as $t\rightarrow+\infty$. While, the jump on $\widehat{\Gamma}_{3}$ tends to $e^{-2\pi i \nu\sigma_3}$  exponentially fast as $t\rightarrow+\infty$. In the following two sections, we construct a global parametrix with the remaining constant jump $e^{-2\pi i \nu\sigma_3}$  on  $\widehat{\Gamma}_{3}$  and a local parametrix near the saddle points $z=-\frac{\sqrt{2}}{2}$.
\subsubsection{Global parametrix}

\begin{rhp} \label{RHP: Pinftyhat}
We look for a $2 \times 2$ matrix-valued function $\widehat{P}^{(\infty)}(z)$ satisfying the following properties.
\begin{itemize}
\item[\rm (1)] $\widehat{P}^{(\infty)}(z)$ is analytic in $\mathbb{C} \setminus (-\infty,z_{-}]$.
\item[\rm (2)] $\widehat{P}^{(\infty)}(z)$  satisfies the jump condition
  \begin{equation}\label{eq:Phatjump}
  \widehat{P}^{(\infty)}_{+}(z)=\widehat{P}^{(\infty)}_{-}(z)J_{\infty}(z)
  \end{equation}
  where
  $$
  J_{\infty}(z)=\begin{aligned}
 &\begin{pmatrix}e^{2\pi i \nu}& 0 \\0 & e^{-2\pi i \nu}\end{pmatrix} ,\qquad z\in (-\infty,z_{-}]\\
\end{aligned}.
  $$
  \item[\rm (3)] As $z\to \infty$, we have
  \begin{equation}\label{eq:PhatInfty}
  \widehat{P}^{(\infty)}(z)=\left(I+O\left(\frac 1 {z}\right)\right)z^{\nu\sigma_{3}},
 \end{equation}
 where $\arg z \in(-\pi,\pi)$.
\end{itemize}
\end{rhp}
The solution of the above RH problem is explicitly given by
\begin{equation}\label{Solu:Pinftyhat}
\widehat{P}^{(\infty)}(z)=\left(z+\frac{\sqrt{2}}{2}\right)^{\nu \sigma_{3}},
\end{equation}
where $\arg \left (z+ \frac{\sqrt{2}} {2}\,\right)  \in(-\pi, \pi)$.
\subsubsection{Local parametrices near $z_{-}=-\frac{\sqrt{2}}{2}$}
In this subsection, we seek a function $ P^{(z_{-})}(z)$ satisfying the same jumps as $\widehat{U}(z)$ on the jump contours $\Sigma_{\widehat{U}}$ in the neighborhood $U(z_{-},\delta)$ of the saddle points $z_{-}=-\frac{\sqrt{2}}{2}$ and matching with $\widehat{P}^{(\infty)}(z)$ on the boundary $\partial U(z_{-},\delta)$.
\begin{rhp} \label{RHP: P-local}
We look for a $2 \times 2$ matrix-valued function $P^{(z_{-})}(z)$ satisfying the following properties.
\begin{itemize}
\item[\rm (1)] $ P^{(z_{-})}(z)$ is analytic in $U(z_{-},\delta)\setminus \Sigma_{\widehat{U}}$.
\item[\rm (2)] $ P^{(z_{-})}(z)$  satisfies the same jump condition as $\widehat{U}(z)$ on $U(z_{-},\delta)\cap\Sigma_{\widehat{U}}$.
  \item[\rm (3)] On $\partial U(z_{-},\delta)$, we have
  \begin{equation}\label{eq:P-Infty}
 P^{(z_{-})}(z)=t^{-\frac{\nu}{2}\sigma_{3}}
 \left(I+O\left(t^{-\frac{1}{2}}\right)\right)
 t^{\frac{\nu}{2}\sigma_{3}}\widehat{P}^{(\infty)}(z),\quad \mathrm{as}\quad t\to \infty.
 \end{equation}
\end{itemize}
\end{rhp}
In the case $\nu \in \mathbb{C}\setminus \mathbb{N}$, we  define the following conformal mapping
\begin{equation}\label{eta}
\eta(z)=2\sqrt{-\widehat{g}(z)}=2^{\frac{3}{4}}\left(z+\frac{\sqrt{2}}{2}\right)
\left(1+o(1)\right),\quad z\rightarrow -\frac{\sqrt{2}}{2}.
\end{equation}
Let $\Phi^{(PC)}$ be the parabolic cylinder parametrix given in Appendix \ref{PCP}. Then the parametrix near
$z=z_{-}$ can be constructed as follows:
\begin{equation}\label{P-}
P^{(z_{-})}(z)=E^{(z_{-})}(z)\Phi^{(PC)}\left(t^{\frac{1}{2}}\eta(z)\right)
\left(-e^{-\pi i \nu}h_{1}\right)^{\frac{\sigma_3}{2}}D(z)\sigma_{1}e^{-t\widehat{g}(z)\sigma_{3}},
\end{equation}
where $ h_{1}=\frac{\sqrt{2 \pi}}{\Gamma(-\nu)} e^{i \pi \nu}$ for $\nu \in \mathbb{C}\setminus \mathbb{N}$, as defined in \eqref{h0}, and $E^{(z_{-})}(z)$ is defined as
\begin{equation}\label{Ehat}
E^{(z_{-})}(z)=\widehat{P}^{(\infty)}(z)\sigma_{1}\eta(z)^{\nu\sigma_{3}}t^{\frac{1}{2}\nu \sigma_{3}}\left(-e^{-\pi i \nu}h_{1}\right)^{-\frac{\sigma_3}{2}} 2^{-\frac{\sigma_3}{2}}\begin{pmatrix}t^{\frac{1}{2}}\eta(z) & 1 \\ 1 & 0\end{pmatrix},
\end{equation}
with
\begin{equation}\label{eq:D}
D(z)=\left\{\begin{aligned}
&I, \quad &\arg z&\in(-\frac{\pi}{4},\pi),\\
&e^{2\pi i \nu\sigma_{3}}, \quad &\arg z&\in(\pi,\frac{7\pi}{4}).
\end{aligned}\right.
\end{equation}
Here, the branch of the function $\eta(z)^{\nu}$ is chosen such that $\arg\eta(z)\in(-\pi,\pi)$.
It is easily seen that $E^{(z_{-})}(z)$ is analytic in the neighborhood $U(z_{-},\delta)$. Furthermore, a combination of
\eqref{Solu:Pinftyhat}, \eqref{P-}, \eqref{Ehat} and \eqref{PCAsyatinfty} gives
\begin{equation}\label{P-:match}
\begin{aligned}
&P^{(z_{-})}(z)\widehat{P}^{(\infty)}(z)^{-1}\\
&=W^{(z_{-})}(z)\begin{pmatrix}1+\frac{\nu(\nu+1)}{2t\eta(z)^{2}}+O(t^{-2})&\frac{\nu}{t^{\frac{1}{2}}\eta(z)}+O(t^{-\frac{3}{2}})\\
\frac{1}{t^{\frac{1}{2}}\eta(z)}+O(t^{-\frac{3}{2}})&1-\frac{\nu(\nu-1)}{2t\eta(z)^{2}}+O(t^{-2})\end{pmatrix}W^{(z_{-})}(z)^{-1},\\
&=I+Q^{(z_{-})}(z)t^{-\frac{1}{2}}+O(t^{-1}),
\end{aligned}
\end{equation}
where
\begin{equation}
W^{(z_{-})}(z)=\widehat{P}^{(\infty)}(z)\sigma_{1}\eta(z)^{\nu\sigma_{3}}
t^{\frac{1}{2}\nu\sigma_{3}}\left(-e^{-\pi i \nu}h_{1}\right)^{-\frac{\sigma_{3}}{2}},
\end{equation}
and $Q^{(z_{-})}(z)$ is given by
\begin{equation}\label{eq:Q}
Q^{(z_{-})}(z)=t^{-\frac{1}{2}\nu\sigma_{3}}\begin{pmatrix}0&
-\frac{\left(z+\frac{\sqrt{2}}{2}\right)^{2\nu}h_{1}e^{-\pi i\nu}}{\eta(z)^{2\nu+1}}\\
-\frac{\nu\eta(z)^{2\nu-1} e^{\pi i\nu}}{h_{1}\left(z+\frac{\sqrt{2}}{2}\right)^{2\nu}}&0\end{pmatrix}
t^{\frac{1}{2}\nu\sigma_{3}}.
\end{equation}

\begin{figure}[h]
  \centering
  \includegraphics[width=8.5cm,height=6.8cm]{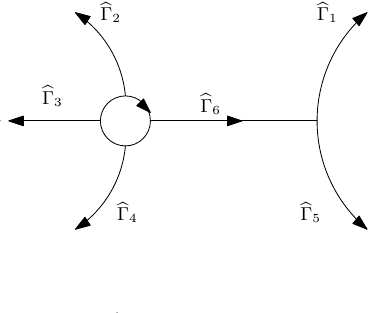}\\
  \caption{The jump contours $\Sigma_{\widehat{R}}$ of the RH problem for $\widehat{R}(z)$}\label{NEWfinalR}
\end{figure}
\subsubsection{Final transformation}
The final transformation is defined as
\begin{equation}\label{model:hatR}
\widehat{R}(z)=\left\{\begin{aligned}
&t^{\frac{1}{2}\nu\sigma_{3}}\widehat{U}(z)P^{(z_{-})}(z)^{-1}
t^{-\frac{1}{2}\nu\sigma_{3}},\quad &z&\in U(z_{-},\delta)\setminus\Sigma_{\widehat{U}},\\
&t^{\frac{1}{2}\nu\sigma_{3}}\widehat{U}(z)\widehat{P}^{(\infty)}(z)^{-1}t^{-\frac{1}{2}\nu\sigma_{3}},&\mathrm{e}&\mathrm{lsewhere}.\\ \end{aligned}
\right.
\end{equation}
Then $\widehat{R}(z)$ satisfies the following RH problem.
\begin{rhp} \label{RHP: hatR}
\item{(1)} $\widehat{R}(z)$ is analytic for $z\in \mathbb{C}\setminus\Sigma_{\widehat{R}}$, where the contour $\Sigma_{\widehat{R}}$ is illustrated in Figure \ref{NEWfinalR}.
\item{(2)} On the contour $\Sigma_{\widehat{R}}$, we have $\widehat{R}_+(z)=\widehat{R}_-(z)J_{\widehat{R}}(z)$, where
 \begin{equation}\label{JumpRhat}
 J_{\widehat{R}}(z)=\left\{\begin{aligned}
&t^{\frac{1}{2}\nu\sigma_{3}}P^{(z_{-})}(z)\widehat{P}^{(\infty)}(z)^{-1}t^{-\frac{1}{2}\nu\sigma_{3}},\quad &z&\in \partial U(z_{-},\delta),\\
&t^{\frac{1}{2}\nu\sigma_{3}}\widehat{P}^{(\infty)}(z)J_{\widehat{U}}(z)\widehat{P}^{(\infty)}(z)^{-1}t^{-\frac{1}{2}\nu\sigma_{3}},\quad &\mathrm{e}&\mathrm{lsewhere}.\\ \end{aligned}
\right.
 \end{equation}
\item{(3)} As $z\rightarrow\infty$, we have
\begin{equation}\label{Rhat:infty}
\widehat{R}(z)=I+\frac{\widehat{R}_1}{z}+O\left(\frac{1}{z^{2}}\right).
\end{equation}
\end{rhp}
By virtue of the matching condition \eqref{eq:P-Infty}, we have the following estimates
\begin{equation}\label{JRhatestimation}
J_{\widehat{R}}(z)=\left\{\begin{aligned}
&I+O(t^{-\frac{1}{2}}), &z&\in\partial U(z_{-},\delta),\\
&I+O(e^{-c_2t}),\quad &\mathrm{e}&\mathrm{lsewhere},
\end{aligned}\right.
\end{equation}
where $c_2$ is a positive constant. Consequently, we have
\begin{equation}\label{Rhat:estimation}
\widehat{R}(z)=I+O(t^{-\frac{1}{2}}), \quad \text{as} \quad t\rightarrow\infty,
\end{equation}
uniformly for $z\in \mathbb{C}\setminus\Sigma_{\widehat{R}}$. This completes the RH analysis for the case $\nu\in\mathbb{C}\setminus\mathbb{N}$.

For $\nu\in\mathbb{N}\setminus\{0\}$,  the asymptotics of $\Psi(z,x;\nu)$  can be obtained from that of  $\Psi(z,x;-\nu)$  by using the symmetry relation
\begin{equation}\label{eq: Symm}
\Psi(z,x;\nu) =(-1)^{\nu} \sigma_1 \Psi(-z,x;-\nu) \sigma_1,\quad \nu\in\mathbb{N}.
\end{equation}
\subsection{Asymptotics of $u(x)$ and $H(x)$ as $x\to-\infty$ for $\nu\in\mathbb{C}\setminus\{0\}$}
First, we   consider the case   $\nu\in\mathbb{C}\setminus\mathbb{N}$.
According to our previous RH analysis $\Psi\mapsto\widehat{\Phi}\mapsto \widehat{U}\mapsto \widehat{R}$, we have that, for large $z$,
\begin{equation}\label{RPhat}
e^{-t\widehat{\theta}(-\frac{\sqrt{2}}{2})\sigma_{3}}\widehat{\Phi}(z)e^{-t\widehat{g}(z)\sigma_{3}}=t^{-\frac{1}{2}\nu\sigma_{3}}\widehat{R}(z)t^{\frac{1}{2}\nu\sigma_{3}}\widehat{P}^{(\infty)}(z).
\end{equation}
The asymptotic approximation of $\widehat{P}^{(\infty)}(z)$ can be  obtained by expanding the explicit representation \eqref{Solu:Pinftyhat}.
\begin{equation}\label{Phat:infty}
\widehat{P}^{(\infty)}(z)=\left(I+\frac{\widehat{P}^{(\infty)}_1}{z}+O\left(\frac{1}{z^2}\right)\right)z^{\nu \sigma_{3}},\quad \mathrm{as} \quad z\rightarrow\infty,
\end{equation}
where
\begin{equation}\label{Phat:infty1}
\widehat{P}^{(\infty)}_1=\begin{pmatrix}
\frac{\sqrt{2}}{2}\nu & 0\\
0 & -\frac{\sqrt{2}}{2}\nu
\end{pmatrix}.
\end{equation}
From \eqref{RPhat} and the asymptotic behaviors \eqref{hatPhi:infty}, \eqref{Rhat:infty},  \eqref{Phat:infty} and \eqref{Phat:infty1}, we have
\begin{equation}\label{R12P12hat}
e^{-\frac{\sqrt{2}}{3}t}(\widehat{\Phi}_{1})_{12}=t^{-\nu}(\widehat{R}_{1})_{12}, \quad  e^{\frac{\sqrt{2}}{3}t}(\widehat{\Phi}_{1})_{21}=t^{\nu}(\widehat{R}_{1})_{21},
\end{equation}
and
\begin{equation}\label{R11P11hat}
(\widehat{\Phi}_{1})_{11}=(\widehat{P}^{(\infty)}_{1})_{11}+(\widehat{R}_{1})_{11}.
\end{equation}
In virtue of \eqref{P-:match}, \eqref{JumpRhat} and \eqref{Rhat:infty}, we easily get
\begin{equation}
\widehat{R}(z)=I+\frac{\widehat{R}^{(1)}(z)}{t^{ {1}/{2}}}+O\left(\frac 1 t\right), \quad\text{for}\quad z\in\partial U(z_{-},\delta),
\end{equation}
where $\widehat{R}^{(1)}(z)=O(z^{-1})$ as $z\rightarrow\infty$ and $\widehat{R}^{(1)}(z)$  satisfies the jump relation
\begin{equation}
\widehat{R}^{(1)}_{+}(z)-\widehat{R}^{(1)}_{-}(z)=Q^{(z_{-})}(z), \quad z\in\partial U(z_{-},\delta).
\end{equation}
The solution to the above RH problem can be constructed explicitly as follows:
\begin{equation}
\widehat{R}^{(1)}(z)=\left\{\begin{aligned}
&\frac{L_{-}}{z+\frac{\sqrt{2}}{2}}, &z&\notin \overline{U(z_{-},\delta)},\\
&\frac{L_{-}}{z+\frac{\sqrt{2}}{2}}-Q^{(z_{-})}(z), &z&\in  U(z_{-},\delta),
\end{aligned}\right.
\end{equation}
where $L_{-}=\Res\left(Q^{(z_{-})}(z),z=-\frac{\sqrt{2}}{2}\right)$. Expanding this as $z\rightarrow\infty$, we obtain the asymptotics for   $\widehat{R}_{1}$ in the expansion  \eqref{Rhat:infty}
\begin{equation}\label{R1hat:asym}
\widehat{R}_1=\frac{\widehat{R}^{(1)}_1}{t^{ {1}/{2}}}+O\left(\frac 1 t\right),\quad \mathrm{as} \quad t\rightarrow\infty,
\end{equation}
where
\begin{equation}
\widehat{R}^{(1)}_1=L_{-}=\begin{pmatrix}0&-2^{-\frac{3}{2}\nu-\frac{3}{4}}h_{1}e^{-\pi i\nu}\\
-\nu2^{\frac{3}{2}\nu-\frac{3}{4}}e^{\pi i\nu}h^{-1}_{1}&0\end{pmatrix}.
\end{equation}

Substituting  \eqref{R1hat:asym} and \eqref{Phat:infty1} into \eqref{R12P12hat} and \eqref{R11P11hat},  and
using 
 \eqref{eq:psi1} and \eqref{rescaling-}, we obtain for  $\nu\in \mathbb{C}\setminus \mathbb{N}$
\begin{equation}\label{u:asy}
u(x;\nu)=\sqrt{-\frac{x}{2}}+\frac{\nu+\frac{1}{2}}{2x}+O\left((-x)^{-\frac{5}{2}}\right),
\quad x\rightarrow -\infty,
\end{equation}
and
\begin{equation}\label{H:asy}
H(x;\nu)=-\nu\sqrt{-\frac{x}{2}}+O\left((-x)^{-1}\right),\quad x\rightarrow -\infty.
\end{equation}

Next, we consider the case $\nu\in\mathbb{N}\setminus\{0\}$.
From  \eqref{eq:psi1} and \eqref{eq: Symm}, we have
\begin{equation}\label{eq: Symmu}
u(x;\nu) =-\frac{d}{dx}\log (\Psi_1)_{21}(x;-\nu),
\end{equation}
and
\begin{equation}\label{eq: SymmH}
H(x;\nu) =H(x;-\nu)
\end{equation}
for $\nu\in\mathbb{N}\setminus\{0\}$.
Therefore,  we obtain from   \eqref{R12P12hat}, \eqref{H:asy}, \eqref{eq: Symmu} and \eqref{eq: SymmH}
that
\begin{equation}\label{u-:asy}
u(x;\nu)=-\sqrt{-\frac{x}{2}}+\frac{\nu+\frac{1}{2}}{2x}+
O\left((-x)^{-\frac{5}{2}}\right),\quad x\rightarrow -\infty,
\end{equation}
and
\begin{equation}\label{H-:asy}
H(x;\nu)=\nu\sqrt{-\frac{x}{2}}+O\left((-x)^{-1}\right),\quad x\rightarrow -\infty,
\end{equation}
for $\nu\in \mathbb{N}\setminus\{0\}$.

\subsection{Asymptotics for $\Psi$ as $x\rightarrow-\infty$ for $\nu=0$}
In the section, we consider the remaining case $\nu=0$. Still, we carry out a nonlinear steepest descent analysis of the Riemann-Hilbert problem for $\Psi$.
The first  transformation  is the same as \eqref{rescaling-}.
While  the second transformation is now defined by
\begin{equation}\label{tildeU}
\widetilde{U}(z)=\widehat{\Phi}(z)e^{-t\widehat{\theta}(z)\sigma_{3}},\quad\mbox{with}\quad \widehat{\theta}(z)=\frac{1}{3}z^3-\frac{1}{2}z .
\end{equation}
We also deform the original contour $\Gamma_{k},~k=1, 2, 4, 5$ into the anti-Stokes curves of $\widehat{\theta}(z)$ as shown in Figure \ref{deformation}.
Then $\widetilde{U}(z)$ solves the following RH problem.
\begin{figure}[ht]
  \centering
  \includegraphics[width=6.5cm,height=5.5cm]{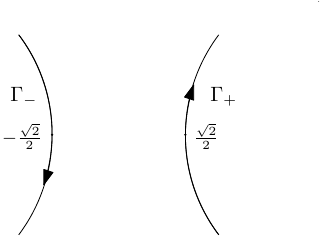}\\
  \caption{The jump curves $\Sigma_{\widetilde{U}}$ of the RH problem for $\widetilde{U}(z)$}\label{deformation}
\end{figure}

\begin{rhp}\label{RHP: tildeU}
\item{(1)} $\widetilde{U}(z)$ is analytic in $\mathbb{C}\setminus \Sigma_{\widetilde{U}}$.
\item{(2)} For $z\in \Sigma_{\widetilde{U}}$, we have $\widetilde{U}_{+}(z)=\widetilde{U}_{-}(z)J_{\widetilde{U}}(z)$, where
  \begin{equation}\label{eq:tildeUjump}
 J_{\widetilde{U}}(z)=\left\{
\begin{aligned}
&\begin{pmatrix}1 & e^{2t\widehat{\theta}(z)}\\ 0 & 1 \end{pmatrix},\quad &z&\in\Gamma_{+},\\
&\begin{pmatrix}1 & 0\\ e^{-2t\widehat{\theta}(z)} & 1 \end{pmatrix},\quad &z&\in\Gamma_{-}.
\end{aligned}
\right.
  \end{equation}
  \item{(3)} As $z\to \infty$, we have
  \begin{equation}\label{eq:tildeUInfty}
  \widetilde{U}(z)=\left (I+\frac{\widetilde{U}_1}{z}+O\left (\frac 1 {z^2}\right )\right ).
 \end{equation}
\end{rhp}
From \eqref{eq:tildeUjump}, we have
  \begin{equation}\label{eq:UjumpEst} J_{\widetilde{U}}(z)-I=O\left(e^{-2t\widehat{\theta}(z)}\right),
 \end{equation}
 where  $z\in \Sigma_{\widetilde{U}}$.
Therefore, we have the estimate
  \begin{equation}\label{eq:tildeUEst}||J_{\widetilde{U}}(z)-I||_{L^2\cap L^{\infty}(\Gamma_{+}\cup\Gamma_{-1})}=O\left(e^{-\frac{\sqrt{2}}{3}t}\right).
 \end{equation}
This ensures the  solvability  of the RH problem for $\widetilde{U}(z)$, and the solution can be given by the integral formula
\begin{equation}\label{U:solu}
\widetilde{U}(z)=I+\frac{1}{2\pi i}\int_{\Gamma_{+}\cup \Gamma_{-}}\frac{ \widetilde{U}_{-}(s)(J_{\widetilde{U}}(s)-I)}{s-z}ds.
\end{equation}
This expression  
 implies the   large-$z$ expansion
\begin{equation}\label{eq:U1est}
\widetilde{U}(z)=I+\frac{\widetilde{U}_1(s)}{z} +O\left(\frac 1{z^2}\right ),
\end{equation}
where

\begin{equation}\label{Res:U}
\widetilde{U}_{1}(t)=-\frac{1}{2\pi i}\int_{\Gamma_{+}\cup \Gamma_{-}}\widetilde{U}_{-}(s)\left( J_{\widetilde{U}}(s)-I\right)ds.
\end{equation}
To continue, we compute by using  \eqref{eq:tildeUjump} and \eqref{eq:tildeUEst}
\begin{equation}\label{U112}
\left(\widetilde{U}_{1}\right)_{12}=-\frac{1}{2\pi i }\int_{\Gamma_{+}\cup \Gamma_{-}} \left(J_{\widetilde{U}}(s)-I\right)_{12} ds+O\left(e^{-\frac{2\sqrt{2}}{3}t}\right)
=c\,e^{-\frac{\sqrt{2}}{3}t}t^{-\frac{1}{2}}\left(1+O\left(t^{-1}\right)\right),
\end{equation}
for some nonzero constant  $c$. We also have from  \eqref{eq:tildeUjump}
\begin{align}\label{U111}
\left(\widetilde{U}_{1}\right)_{11}&=-\frac{1}{2\pi i }\int_{\Gamma_{+}\cup \Gamma_{-}}  \left((\widetilde{U}_{-}(s)-I)(J_{\widetilde{U}}(s)-I)\right)_{11} ds\\
&= -\frac{1}{2\pi i }\int_{ \Gamma_{-}}  (\widetilde{U}_{-}(s)-I)_{12}e^{-t\widehat{\theta}(s)}ds\\
&=O\left(e^{-\frac{2\sqrt{2}}{3}t}\right).
\end{align}
Recalling the relation $\Psi_{1}=(-x)^{\frac{1}{2}}\widetilde{U}_{1}$,  we obtain from  \eqref{eq:psi1} and \eqref{U112} that
\begin{equation}\label{final:solu}
u(x;0)=-\sqrt{-\frac{x}{2}}+\frac{1}{4x}+O\left((-x)^{-\frac{5}{2}}\right),\quad x\rightarrow-\infty.
\end{equation}
From \eqref{eq:psi1} and \eqref{U111}, we have
\begin{equation}\label{final:solu-H}
H(x;0)=O\left((-x)^{1/2}e^{-\frac{2\sqrt{2}}{3}(-x)^{3/2}}\right),\quad x\rightarrow-\infty.
\end{equation}
Now combining \eqref{u:asy}-\eqref{H-:asy}, \eqref{final:solu} and \eqref{final:solu-H}, we further have \eqref{eq:uasyneginfty} and \eqref{eq:Hasyneginfty}.


\section{Proof of Theorem \ref{thm2} }\label{sectionDh}  
Tracing back the series of  invertible transformations $Y\to \widehat{Y}\to T\to S\to R$, we obtain
\begin{equation}\label{eq:YApprox1}
 Y(z)=n^{-\frac{ \nu}{3}\sigma_3} R(z)n^{\frac{ \nu}{3}\sigma_3} \left(\frac{z+1}{z}\right)^{\nu \sigma_3}e^{-\frac{t}{2z}\sigma_3}z^{n\sigma_3}\quad
 \mbox{for}\quad |z|>1+\delta, \end{equation}
 and
 \begin{equation}\label{eq:YApprox2}
 Y(z)=n^{-\frac{ \nu}{3}\sigma_3} R(z)n^{\frac{ \nu}{3}\sigma_3} (z+1)^{\nu \sigma_3}i\sigma_2e^{-\frac{t}{2}z\sigma_3}\quad
 \mbox{for}\quad  |z|<\delta ,
 \end{equation}
 where
  \begin{equation*}
   \sigma_2=\begin{pmatrix}0 & -i\\ i & 0\end{pmatrix}
  \end{equation*}
  is one of the Pauli matrices.

Thus, we get from \eqref{eq:R1O} and \eqref{eq:YApprox1} that
\begin{equation}\label{eq:Y1}
 (Y_{-1})_{11}= -\frac{t}{2}+\nu -\frac{1}{n^{1/3}}2^{\frac{2}{3}}H\left(2^{\frac{2}{3}}n^{2/3}(\tau-1);\nu\right)+
 O\left(\frac 1 {n^{ 2/3}}\right),
 \end{equation}
where the Hamiltonian $H(s;\nu)=-(\Psi_1)_{11}(s)$; cf.  \eqref{eq:psi1}.
Similarly,  using \eqref{eq:R1O} and \eqref{eq:YApprox2}, we obtain
\begin{align}
\frac{d}{dz} \log (Y)_{21}(z)|_{z=0}&= -\frac{t}{2}-\nu +\frac{d}{dz} \log (R)_{22}(z)|_{z=0}\nonumber\\
&= -\frac{t}{2}-\nu - 2^{\frac{2}{3}}n^{-\frac 1 3}
H\left(2^{\frac{2}{3}}n^{\frac 2 3}(\tau-1);\nu\right)+O\left(n^{-\frac 2 3}\right).\label{eq:Y0}
\end{align}
From the differential identity \eqref{eq:dD}, \eqref{eq:Y1} and \eqref{eq:Y0}, we have \eqref{eq:Dasy}.
In virtue of  \eqref{eq:hY}, \eqref{eq:R1O} and \eqref{eq:YApprox2}, we have
\begin{equation}\label{eq:hnasy}
h_{n}=Y_{12}(0)=1-2^{\frac{2}{3}}n^{-\frac 1 3}H\left(2^{\frac{2}{3}}n^{\frac 2 3}(\tau-1); \nu \right)+O\left(n^{-\frac 2 3}\right),
\end{equation}
\begin{equation}\label{eq:recuasy1}
\pi_{n}(0)=Y_{11}(0)=-2^{\frac{2}{3}+\frac{4}{3}\nu}n^{-\frac{1}{3}-\frac{2}{3}\nu}
e^{n\pi i}\left((\Psi_1)_{12}\left(2^{\frac{2}{3}}n^{\frac{2}{3}}(\tau-1);\nu\right)+O\left(n^{-\frac 1 3}\right)\right).
\end{equation}
Recalling \eqref{eq:psi1}, we obtain \eqref{eq:hasy} and \eqref{eq:Rasy}.
This completes the proof of Theorems \ref{thm2}.

\section*{Acknowledgements}
The authors are grateful to the reviewers for their constructive comments and suggestions.
The work of Shuai-Xia Xu was supported in part by the National Natural Science Foundation of China under grant numbers 11971492 and 12371257, and by  Guangdong Basic and Applied Basic Research Foundation (Grant No. 2022B1515020063). Yu-Qiu Zhao was supported in part by the National Natural Science Foundation of China under grant numbers 11971489 and 12371077.

\begin{appendices}

\section{Local parametrix models}
\subsection{Airy parametrix}\label{AP}
Define
\begin{equation}
\Phi^{(\mathrm{Ai})}(z)=M\left\{
\begin{aligned}
&\begin{pmatrix}
\mathrm{Ai}(e^{-\frac{2}{3}\pi i}z) & \mathrm{Ai}(z) \\
e^{-\frac{2}{3}\pi i} \mathrm{Ai}^{\prime}(e^{-\frac{2}{3}\pi i}z) & \mathrm{Ai}^{\prime}(z)
\end{pmatrix} e^{i \frac{\pi}{6} \sigma_{3}}, &z&\in \mathrm{I},\\
&\begin{pmatrix}
\mathrm{Ai}(e^{-\frac{2}{3}\pi i}z) & \mathrm{Ai}(z) \\
e^{-\frac{2}{3}\pi i} \mathrm{Ai}^{\prime}(e^{-\frac{2}{3}\pi i}z) & \mathrm{Ai}^{\prime}(z)
\end{pmatrix} e^{i \frac{\pi}{6} \sigma_{3}}\begin{pmatrix}
1 & -1\\
0 & 1
\end{pmatrix},&z&\in \mathrm{II},\\
&\begin{pmatrix}
\mathrm{Ai}(e^{-\frac{2}{3}\pi i}z) & \mathrm{Ai}(z) \\
e^{-\frac{2}{3}\pi i} \mathrm{Ai}^{\prime}(e^{-\frac{2}{3}\pi i}z) & \mathrm{Ai}^{\prime}(z)
\end{pmatrix} e^{i \frac{\pi}{6} \sigma_{3}}\begin{pmatrix}
0 & -1 \\
1 & 1
\end{pmatrix},&z&\in \mathrm{III},
\end{aligned}
\right.
\end{equation}
where $\mathrm{Ai}(z)$ is the Airy function, 
$$
M=\sqrt{2\pi}\begin{pmatrix}e^{\frac{1}{6}\pi i} & 0\\ 0 & -e^{\frac{1}{6}\pi i}\end{pmatrix},
$$
and the regions I-III are shown in Figure \ref{Airy}. Then, $\Phi^{(\mathrm{Ai})}(z)$ satisfies the following RH problem \cite[Section 3.7]{B}, which is related to the standard one in \cite[Chapter 7]{D1} and \cite{DKMVZ2} after some minor modifications.
\begin{figure}[h]
  \centering
  \includegraphics[width=6.5cm,height=5cm]{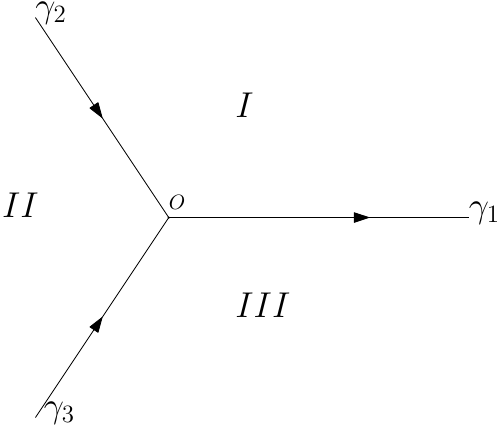}\\
  \caption{The jump contours and regions for $\Phi^{(\mathrm{Ai})}$}\label{Airy}
\end{figure}

\begin{rhp}
\item[\rm (1)] $\Phi^\mathrm{(Ai)}(z)$ is analytic for $s\in \mathbb{C}\setminus \cup^3_{k=1}\gamma_{k}$, see Figure \ref{Airy}.
\item[\rm (2)] $\Phi^\mathrm{(Ai)}(z)$ satisfies the jump relations $\Phi^\mathrm{(Ai)}_+(z)=\Phi^\mathrm{(Ai)}_-(z)J^{(\mathrm{Ai})}_k(z)$, $z\in\gamma_{k}$, $k=1,2,3$, where
\begin{equation*}
J^{(\mathrm{Ai})}_1(z)=\begin{pmatrix}1 & 1 \\ -1 & 0 \end{pmatrix},\
J^{(\mathrm{Ai})}_2(z)=\begin{pmatrix}1 & 1 \\ 0 & 1 \end{pmatrix},  \ J^{(\mathrm{Ai})}_3(z)=\begin{pmatrix}1 & 0 \\ -1 & 1 \end{pmatrix}.
\end{equation*}
\item[\rm (3)] 
As $z\to\infty$, we have
\begin{equation}\label{AiryAsyatinfty}
\Phi^{(\mathrm{Ai})}(z)=z^{-\frac{\sigma_{3}}{4} }\frac{1}{\sqrt{2}} \begin{pmatrix}i & 1\\ -i &1\end{pmatrix}\left(I+\frac{1}{48z^{\frac{3}{2}}}\begin{pmatrix}-1 & 6i \\ 6i & 1 \end{pmatrix}+O\left(z^{-3}\right)\right) e^{\frac{2}{3} z^{\frac{3}{2}} \sigma_{3}},
\end{equation}
where $\arg z\in(0,2\pi)$.
\end{rhp}

\subsection{Parabolic cylinder  parametrix}\label{PCP}
Define
$$Z_{0}(z)=2^{-\frac{\sigma_{3}}{2}}\begin{pmatrix}
D_{-\nu-1}(i z) & D_{\nu}(z) \\D_{-\nu-1}'(i z) & D_{\nu}'(z)\end{pmatrix}
\begin{pmatrix}
e^{i \frac{\pi}{2}(\nu+1)} & 0 \\
0 & 1
\end{pmatrix}
$$
and
$$Z_{n+1}(z)=Z_{n}(z) H_{n},\quad n=0,1,2,3,$$
where $D_{-\nu-1}$, $D_{\nu}$ are the standard parabolic cylinder functions (cf. \cite[Chapter 12]{Olver}),
$$
H_{0}=\begin{pmatrix}1 & 0 \\ h_{0} & 1\end{pmatrix}, \
H_{1}=\begin{pmatrix}1 & h_{1} \\ 0 & 1\end{pmatrix}, \
H_{n+2}=e^{i \pi\left(\nu+\frac{1}{2}\right) \sigma_{3}} H_{n} e^{-i \pi\left(\nu+\frac{1}{2}\right) \sigma_{3}}, \ n=0,1,2
$$
with
\begin{equation}\label{h0}
h_{0}=-i \frac{\sqrt{2 \pi}}{\Gamma(\nu+1)}, \quad h_{1}=\frac{\sqrt{2 \pi}}{\Gamma(-\nu)} e^{i \pi \nu}, \quad 1+h_{0} h_{1}=e^{2 \pi i \nu}.
\end{equation}
\begin{figure}[h]
  \centering
  \includegraphics[width=7cm,height=6.1cm]{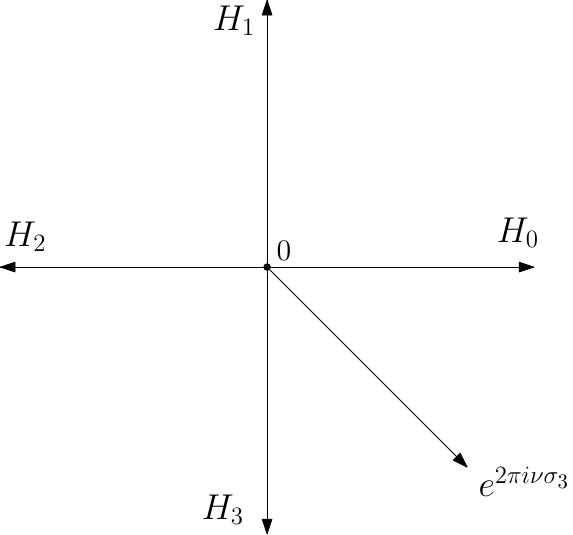}\\
  \caption{The jump contours and jump matrices for $\Phi^{(\mathrm{PC})}$}\label{PC}
\end{figure}

Let
$$
\Phi^{(\mathrm{PC})}(z)=\left\{\begin{aligned}
Z_{0}(z),\quad & \arg z \in\left(-\frac{\pi}{4}, 0\right), \\
Z_{1}(z),\quad & \arg z \in\left(0, \frac{\pi}{2}\right), \\
Z_{2}(z),\quad & \arg z \in\left(\frac{\pi}{2}, \pi\right), \\
Z_{3}(z),\quad & \arg z \in\left(\pi, \frac{3 \pi}{2}\right), \\
Z_{4}(z),\quad & \arg z \in\left(\frac{3 \pi}{2}, \frac{7 \pi}{4}\right),
\end{aligned}\right.
$$
then $\Phi^{(\mathrm{PC})}(z)$ solves the following RH problem (cf. \cite{BI,FIKN}).

\begin{rhp}
\item[\rm (1)] $\Phi^\mathrm{(PC)}(z)$ is analytic for all $z\in \mathbb{C}\setminus \bigcup^5_{k=1}\Sigma_{k}$, where $\Sigma_{k}=\{z\in\mathbb{C}:\arg z=\frac{k\pi}{2}\}$, $k=1,2,3,4$ and  $\Sigma_{5}=\{z\in\mathbb{C}:\arg z=-\frac{\pi}{4}\}$; see Figure \ref{PC}.
\item[\rm (2)] $\Phi^\mathrm{(PC)}(z)$ satisfies the jump conditions as shown in Figure \ref{PC}.
\item[\rm (3)] 
As $z\to\infty$, we have
\begin{align}\label{PCAsyatinfty}
\Phi^\mathrm{(PC)}(z)=\begin{pmatrix}0 &1 \\ 1 & -z\end{pmatrix}2^{\frac{\sigma_3}{2}}\begin{pmatrix}
1+O\left(\frac{1}{z^2}\right) & \frac{\nu}{z}+O\left(\frac{1}{z^3}\right) \\ \frac{1}{z}+O\left(\frac{1}{z^3}\right) & 1+O\left(\frac{1}{z^2}\right)\end{pmatrix}
e^{\left(\frac{z^{2}}{4}-\nu\log z\right) \sigma_{3}}.
\end{align}
\end{rhp}

\end{appendices}

\end{document}